\newtheorem{thm}{Theorem}
\newtheorem{lem}{Lemma}
\newtheorem{pro}{Property}
\newenvironment{keyword}{\par{\noindent\bf Keywords:}}
\begin{document}

\title{Robust recoverable and two-stage selection problems}

\author{Adam Kasperski\thanks{Corresponding author}\\
   {\small \textit{Faculty of Computer Science and Management,}}
  {\small \textit{Wroc{\l}aw University of Science and Technology,}}\\
  {\small \textit{Wybrze{\.z}e Wyspia{\'n}skiego 27,}}
  {\small \textit{50-370 Wroc{\l}aw, Poland,}}
  {\small \textit{adam.kasperski@pwr.edu.pl}}
  \and
  Pawe{\l} Zieli{\'n}ski\\
    {\small \textit{Faculty of Fundamental Problems of Technology,}}
  {\small \textit{Wroc{\l}aw University of Science and Technology,}}\\
  {\small \textit{Wybrze{\.z}e Wyspia{\'n}skiego 27,}}
  {\small \textit{50-370 Wroc{\l}aw, Poland,}}
  {\small \textit{pawel.zielinski@pwr.edu.pl}}} 
  
  \date{}
    
\maketitle

\begin{abstract}
In this paper the following selection problem is discussed. A set of $n$ items is given and we wish to choose  a subset of exactly $p$ items of the minimum total cost.  This problem is a special case of  0-1 knapsack in which all the item weights are equal to~1. Its deterministic version has an $O(n)$-time algorithm, which consists in choosing $p$ items of the smallest costs.   
 In this paper it is assumed that the item costs are uncertain.
Two robust models, namely two-stage and recoverable ones,  under discrete and interval uncertainty representations, are discussed. Several positive and negative complexity results for both of them are provided.
\end{abstract}

\begin{keyword}
robust optimization; computational complexity; approximation algorithms; selection problem
\end{keyword}

\section{Introduction}
\label{sintro}

In this paper we wish to investigate the following \textsc{Selection} problem. Let $E=\{e_1,\dots,e_n\}$ be a set of items. Each item $e\in E$ has a nonnegative cost $c_e$ and we wish to choose a subset $X\subseteq E$ of exactly $p$ items of the minimum total cost, $f(X)=\sum_{e \in X} c_e$, where  $p\in [n]=\{1,\dots,n\}$. This problem has a trivial $O(n)$-time algorithm which works as follows. We first determine in $O(n)$-time the $p$th smallest item cost, say $c_{(p)}$ 
(see, e.g.,~\cite{CO90}), and then choose $p$ items from $E$ whose costs are not greater than $c_{(p)}$. \textsc{Selection}  is a special, polynomially solvable version of the 0-1 knapsack problem, in which all the items have unit weights. It is also a special case of some other discrete optimization problems such as minimum assignment or a single machine scheduling problem with the weighted number of late jobs criterion (see~\cite{KKZ13} for more details). It  can be seen as a basic resource
allocation problem~\cite{IK88}. It is also a matroidal problem, as the set of feasible solutions is composed of all bases of an uniform matroid~\cite{PS98}.

Suppose  that the item costs are uncertain and we are given a \emph{scenario set} $\mathcal{U}$, which contains all possible vectors of the item costs, called \emph{scenarios}. We thus only know that one cost scenario $S\in\mathcal{U}$ will occur, but we do not know which one before a solution is  computed. The cost of item $e\in E$ under scenario $S\in \mathcal{U}$ is denoted as $c_e^S$ and we assume that $c_e^S\geq 0$.
No additional information for scenario set $\mathcal{U}$, such as a probability distribution, is provided. Two methods of defining scenario sets are popular in the existing literature (see, e.g,~\cite{KY97}). In the \emph{discrete uncertainty representation}, $\mathcal{U}^D=\{S_1,\dots,S_K\}$ contains $K>1$ explicitly listed scenarios. In the \emph{interval uncertainty representation}, for each item $e\in E$ an interval $[\underline{c}_e,\overline{c}_e]$ of its possible costs is specified and $\mathcal{U}^I=\Pi_{e\in E} [\underline{c}_e,\overline{c}_e]$ is the Cartesian product of these intervals.  The cost of solution $X$ depends now on scenario $S\in \mathcal{U}$, $\mathcal{U}\in\{\mathcal{U}^D,\mathcal{U}^I\}$, and will be denoted as $f(X,S)=\sum_{e\in X} c_e^S$.
In order to choose a solution two robust criteria, namely the \emph{min-max} and \emph{min-max regret} can be applied, which lead to the following two optimization problems:
$$\textsc{Min-Max Selection}: \min_{X\in \Phi} \max_{S\in \mathcal{U}} f(X,S),$$
$$\textsc{Min-Max Regret Selection}: \min_{X\in \Phi} \max_{S\in \mathcal{U}} (f(X,S)-f^*(S)),$$
where $\Phi=\{X\subseteq E: |X|=p\}$ is the set of all feasible solutions and $f^*(S)$ is the cost of an optimal solution under scenario $S$. The quantity $f(X,S)-f^*(S)$ is called a \emph{regret} of solution $X$ under scenario $S$.
Both robust versions of the \textsc{Selection} problem have attracted a considerable attention in the recent literature. It turns out that their complexity depends on the way in which scenario set $\mathcal{U}$ is defined. 
 It has been shown in~\cite{A01} that, under the discrete uncertainty representation, \textsc{Min-Max Regret Selection} is NP-hard even for two scenarios.
 Repeating a similar argument as the one used in~\cite{A01}  gives the result that
  \textsc{Min-Max Selection} remains NP-hard even for two scenarios.
  Both problems become strongly NP-hard when the number of scenarios is a part of input~\cite{KKZ13}.
 Furthermore, in this case \textsc{Min-Max Selection} is  also hard to approximate within any constant factor~\cite{KKZ13}. Several approximation algorithms for \textsc{Min-Max Selection} have been recently proposed in~\cite{KZ09a, KKZ13, D13}. The best known, designed in~\cite{D13}, has an approximation ratio of $O(\log K/ \log \log K)$.
 For the interval uncertainty representation both robust problems are polynomially solvable. The min-max version is trivially reduced to a deterministic counterpart, as it is enough to solve the deterministic problem for scenario $(\overline{c}_e)_{e\in E}$. On the other hand, \textsc{Min-Max Regret Selection} is more involved and some polynomial time algorithms for this problem have been constructed in~\cite{A01,C04}. The best known algorithm with running time $O(n\cdot\min\{p,n-p\})$ has been shown in~\cite{C04}.

Many real world problems arising in operations research and optimization have a two-stage nature. Namely, a complete or a partial solution is determined in the first stage and can be then modified or completed in the second stage. Typically, the costs in the first stage are known while the costs in the second stage are uncertain. This uncertainty is also modeled by providing a scenario set $\mathcal{U}\in\{\mathcal{U}^D,\mathcal{U}^I\}$, which contains all possible vectors of the second stage costs. If no additional information with $\mathcal{U}$ is provided, then the robust criteria can be applied to choose a solution. In this paper we investigate two well known concepts, namely \emph{robust two-stage} and \emph{robust recoverable} ones and apply them to the \textsc{Selection} problem. In the robust two-stage model, a partial solution is formed in the first stage and completed optimally when a true scenario reveals. In the robust recoverable model a complete solution must be formed in the first stage, but it can be modified to some extent after a true scenario occurs.
A key
difference between the models is that the robust two-stage model
pays for the items selected only once, while the recoverable model
pays for items chosen in both stages with the possibility of
replacing a set of items from the first to the second stage,
controlled by the recovery parameter.

 Both models have been discussed in the existing literature for various problems. In particular, the robust two-stage versions of 
the covering~\cite{DGRS05},
the matching~\cite{KMU08} and the minimum spanning tree~\cite{KZ11} problems have been investigated.  The two-stage model has been also considered in the stochastic setting, i.e. when a probability distribution in scenario set is available. Namely, it has been applied to the minimum spanning tree~\cite{EGMS10}, the 0-1 knapsack~\cite{KL11, KO14}, the matching~\cite{KMU08} and the maximum weighted forest~\cite{AALL15} problems.
The robust recoverable approach has been applied to linear programming~\cite{LLMS09}, some network problems~\cite{B11, B12, NO13}, the 0-1 knapsack~\cite{BKK11}, and recently to the traveling salesperson~\cite{CG15b} and the minimum spanning tree~\cite{HKZ16a} problems.

\paragraph{Our results}

 In Section~\ref{secrec} we will investigate the robust recoverable model. We will show that it is strongly NP-hard and not at all approximable, when the number of scenarios is a part of input. A major part of Section~\ref{secrec} is devoted to constructing a polynomial $O((p-k+1)n^2)$ algorithm for the interval uncertainty representation, where $k$ is the recovery parameter.
 In Section~\ref{sec2st} we will study the robust two-stage model. We will prove that it is NP-hard for two second-stage cost scenarios. Furthermore, when the number of scenarios is a part of input, then the problem becomes strongly NP-hard and it has an approximability lower bound of~$\Omega(\log n)$.
 For scenario set $\mathcal{U}^D$, we will construct a randomized algorithm which returns an $O(\log K + \log n)$-approximate solution with high probability.  If $K=poly(n)$, then the randomized algorithm gives the best approximation up to a constant factor.
 We will also show that for the interval uncertainty representation the robust two-stage model is solvable in $O(n)$ time.

\section{Problems formulation}
\label{pform}

Before  we show the formulations of the problems we recall some notations and introduce new ones. Let us fix $p\in [n]$ and define
\begin{itemize}
\item $\Phi=\{X\subseteq E: |X|=p\}$, 
\item $\Phi_1=\{X\subseteq E: |X|\leq p\}$,
\item $\Phi_X=\{Y\subseteq E\setminus X: |Y|=p-|X|\}$, 
\item  $\Phi_X^k=\{Y\subseteq E: |Y|=p, |Y\setminus X|\leq k\}$, $k\in [p] \cup \{0\}$,
\item $C_e$ is the deterministic, first-stage cost of item $e\in E$,
\item $c^S_e$ is the second-state cost of item $e\in E$ under scenario $S\in \mathcal{U}$, where $\mathcal{U}\in \{\mathcal{U}^D, \mathcal{U}^I\}$.
\item $f(X,S)=\sum_{e\in X} c_e^S$, for any subset $X\subseteq E$.
\end{itemize}

 We now define the two-stage model as follows. In the first stage  we choose a subset $X\in \Phi_1$ of the items, i.e. such that $|X|\leq p$, and we add additional $p-|X|$ items to $X$, after observing which scenario in $\mathcal{U}$ has occurred. The cost of $X$ under scenario $S$ is defined as
$$f_1(X,S)=\sum_{e \in X}C_e + \min_{Y\in \Phi_X} f(Y,S).$$
In the \emph{robust two-stage selection} problem we seek $X\in \Phi_1$, which minimizes the maximum cost over all scenarios, i.e.
$$\textsc{Two-Stage Selection}:\; opt_1=\min_{X\in \Phi_1} \max_{S\in\mathcal{U}} f_1(X,S).$$

We now define the robust recoverable model. In the first stage we must choose a complete solution $X\in \Phi$, i.e. such that $|X|=p$. In the second stage additional costs occur for the selected items. However, a limited recovery action is allowed, which consists in replacing at most $k$ items in $X$ with some other items from $E\setminus X$, where $k\in [p] \cup\{0\}$ is a given \emph{recovery parameter}. The cost of $X$ under scenario $S$ is defined as follows:
$$f_2(X,S)=\sum_{e \in X} C_e + \min_{Y\in \Phi_X^k} f(Y,S).$$

In the \emph{robust recoverable selection} problem we wish to find a solution $X\in \Phi$, which minimizes the maximum cost over all scenarios, i.e. 
$$\textsc{Recoverable Selection}: \; opt_2=\min_{X\in \Phi} \max_{S\in \mathcal{U}} f_2(X,S).$$

\section{Robust recoverable selection}
\label{secrec}

In this section we deal with the \textsc{Recoverable Selection} problem.
Consider first  the discrete uncertainty representation, i.e. the problem with scenario set $\mathcal{U}^D$. It is easy to observe that when all the first stage costs are equal to zero and $k=0$, then \textsc{Recoverable Selection} is equivalent to  \textsc{Min-Max Selection} with scenario set $\mathcal{U}^D$. It follows from the fact that the solution formed in the first stage cannot be changed in the second stage and thus $\Phi_X^0=\{X\}$ and  $f_2(X,S)=f(X,S)$. Hence, we have an immediate consequence of the results obtained in~\cite{A01, KKZ13}.  Namely,
under scenario set $\mathcal{U}^D$, the \textsc{Recoverable Selection} problem is  weakly NP-hard when $K=2$. Furthermore, it becomes strongly NP-hard and hard to approximate within any constant factor when $K$ is a part of  input. We now strengthen this result if  the recovery parameter $k$ is allowed to be positive.

\begin{thm}
\label{thmrecd2}
If $K$ is a part of input and $k>0$,
then  \textsc{Recoverable Selection}
 with scenario set~$\mathcal{U}^{D}$
 is strongly NP-hard and not at all approximable unless P=NP.
\end{thm}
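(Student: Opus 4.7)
My plan is to give a polynomial-time, gap-preserving reduction from an NP-complete problem (for instance, 3-SAT or Set Cover) to \textsc{Recoverable Selection} under $\mathcal{U}^D$ with $k>0$, producing instances where $opt_2=0$ on yes-inputs and $opt_2\geq 1$ on no-inputs. Such a gap yields both halves of the theorem simultaneously: strong NP-hardness, because all numeric parameters can be kept polynomial in the input size; and complete inapproximability, because any polynomial-time algorithm with a finite approximation ratio $\alpha$ would have to output a value $\leq \alpha\cdot 0=0$ on yes-instances and a value $\geq 1$ on no-instances, thereby deciding the source problem and forcing P=NP.

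The construction will build on the known hardness reduction for \textsc{Min-Max Selection} (the $k=0$ case, from~\cite{KZ09,KKZ13}), which encodes an NP-complete problem using zero first-stage costs and $0/1$ second-stage costs under $\mathcal{U}^D$. To lift that reduction to $k>0$, I would augment the instance with a gadget in the same spirit as Theorem~\ref{thmrecd1}: $k$ items $f_1,\ldots,f_k$ with $C_{f_j}=0$ and $c_{f_j}^S=M$ for every scenario~$S$, paired with $k$ items $r_1,\ldots,r_k$ with $C_{r_j}=M$ and $c_{r_j}^S=0$, where $M$ is a sufficiently large but still polynomial penalty; the parameter $p$ is raised by $k$. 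Intuitively, the huge penalty $M$ forces any near-optimal solution to include $f_1,\ldots,f_k$ in its first-stage selection and to swap them in the second stage for $r_1,\ldots,r_k$, committing the full recovery budget to the gadget. The residual subproblem on the base items is then exactly a $k=0$ \textsc{Min-Max Selection} instance and inherits the 0-versus-1 gap from the starting reduction.

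The main obstacle is proving the forcing argument rigorously when the number of scenarios is large: one must rule out deviating solutions that include fewer than $k$ of the $f_j$'s in the first stage and then use the leftover recovery swaps on the base items, since the extra adaptive flexibility on the base side could in principle close the gap in no-instances. This is controlled by tightening the base-side item count and the parameter~$p$ so that the first-stage base selection is essentially forced, paralleling the ``$|I|=n$'' case analysis of Theorem~\ref{thmrecd1}, together with choosing $M$ large enough to dominate any conceivable base cost. Once this forcing step is verified the gap is preserved, strong NP-hardness follows from the polynomial size of $M$ and the other numeric data, and non-approximability follows immediately from the gap.
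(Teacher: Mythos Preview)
Your proposal contains a genuine gap that prevents it from establishing the inapproximability half of the theorem. You assert that the base \textsc{Min-Max Selection} reduction carries a ``0-versus-1 gap'', but no such reduction can exist: deciding whether \textsc{Min-Max Selection} has optimum $0$ is trivially polynomial, since $\max_S f(X,S)=0$ holds if and only if every $e\in X$ has $c_e^S=0$ under all scenarios, and one simply checks whether at least $p$ such items exist. Consistently, \textsc{Min-Max Selection} is known to be approximable within $O(\log K/\log\log K)$, so it cannot be ``not at all approximable''. Because your $f_j/r_j$ gadget is designed precisely to absorb the entire recovery budget~$k$, the residual problem on the base items is \textsc{Min-Max Selection} with no recovery at all---and that problem can never produce a yes-instance of value~$0$. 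At best your construction transfers a $1$-versus-$2$ (or $1$-versus-$c$) gap, which would yield strong NP-hardness but not the claimed inapproximability.

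The paper's argument differs in exactly this point: it does \emph{not} neutralize the recovery budget but instead exploits it. The reduction is directly from \textsc{3-SAT} with $k=1$; items encode literal occurrences, and each scenario flags either a contradictory literal pair or a clause triple. On a satisfying assignment one selects one true literal per clause, so under every scenario at most one selected item has cost~$1$, and the single recovery swap removes it, giving $\max_S f_2(X,S)=0$. On an unsatisfiable instance every $p$-selection has two cost-$1$ items under some scenario, and one swap leaves cost $\ge 1$. Thus the recovery move is what manufactures the $0$-versus-$1$ gap, whereas your gadget spends the recovery budget before the gap can be created. If you wish to salvage your approach, the natural fix is to absorb only $k-1$ swaps with the gadget and leave one swap free for the base items, then invoke a $1$-versus-$2$ base gap---which is essentially the paper's idea specialized to $k=1$.
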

\begin{proof}
The reduction is similar to that in~\cite{KZ09a}.
Consider an instance of the strongly NP-complete \textsc{E3-SAT} problem~\cite{GJ79}, 
 in which we are given a set of boolean variables $x_1,\dots,x_n$ and a collection of clauses ${A}_1,\dots A_m$, where each clause is disjunction of exactly three literals (variables or their negations). 
 We ask if there is a truth assignment to the variables which satisfies all the clauses.
We now construct the corresponding instance of the \textsc{Recoverable Selection} 
problem
as follows.
We associate with each clause $A_i=(l_1^i\vee l_2^i \vee l_3^i)$
three items $e_1^i$, $e_2^i$, $e_3^i$
corresponding to three literals in $A_i$. We also create one recovery item~$r$. This gives us the item set $E$, such that
$|E|=3m+1$.  
The first-stage cost of the recovery item~$r$ is set to~$n$ and  the first-stage costs of the remaining items are set to zero.
The scenario set $\mathcal{U}^{D}$ is  formed  as follows.
For each pair of  items  $e_u^s$ and $e_w^t$,
that corresponds to contradictory
 literals  $l_u^s$ and $l_w^t$, i.e. $l_u^s=\overline{l}_w^t$,
we create scenario $S$ such that under this scenario the costs of $e_u^s$ and $e_w^t$ are set to~1 and the costs
 of all the remaining items are set to~0.
For each clause $A_i$, $i\in [m]$, 
we form scenario in which  the costs of 
$e_1^i$, $e_2^i$, $e_3^i$ are 
set to~1 and the rest of items have zero costs. 
We complete the reduction by setting $p=m$ and $k=1$.
 An example of the reduction is shown in Table~\ref{tabrsku}. We will show that the answer to \textsc{E3-SAT} is yes if and only if $opt_2=0$ in the corresponding \textsc{Recoverable Selection} problem.
 
 \begin{table}[h]
	  \centering
	  \caption{The scenarios set for $x_1,\dots, x_3$ and the clauses 
	  $(x_1\vee \overline{x}_2\vee \overline{x}_3)$,
           $(\overline{x}_1\vee x_2 \vee x_3)$, $(x_1 \vee x_2 \vee x_3)$, $p=3$ and $k=1$.}\label{tabrsku}
    \begin{tabular}{c|c|cccccc|ccc}
      $E$&$C_e$&$S_1$&$S_2$&$S_3$&$S_4$&$S_5$&$S_6$&
      $S_7$&$S_8$&$S_9$\\
      \hline
    $e_1^1$&0&1&0&0&0&0&0&1&0&0 \\ 
    $e_2^1$&0&0&1&1&0&0&0&1&0&0\\ 
    $e_3^1$&0&0&0&0&1&1&0&1&0&0\\
    \hline
    $e_1^2$&0&1&0&0&0&0&1& 0&1 &0  \\
    $e_2^2$&0&0&1&0&0&0&0&0&1&0 \\
    $e_3^2$&0&0&0&0&1&0&0&0&1&0 \\
    \hline
    $e_1^3$&0&0&0&0&0&0&1&0&0&1\\
    $e_2^3$&0&0&0&1&0&0&0&0&0&1\\
    $e_3^3$&0&0&0&0&0&1&0&0&0&1\\
    \hline
    $r$&$n$&0&0&0&0&0&0&0&0&0\\
    \end{tabular}   
\end{table}

It is easy to check that
if the answer to \textsc{E3-SAT} is yes, then there is 
a selection $X$  out of $E$ such that $|X|=p$,
containing 
items that do not correspond to contradictory literals
and literals from the same clauses. We form $X$ by choosing exactly one item out of $e^i_1, e^i_2, e^i_3$ for each $i\in [m]$, which corresponds to a true literal in the truth assignment. Thus
the costs  of $X$ under each scenario $S\in \mathcal{U}^{D}$ are at most~1. Hence
we can decrease them to zero
by replacing an item from~$X$ with the cost of~1 under any scenario 
with recovery item $r$ and 
$\max_{S\in \mathcal{U}^{D}} f_2(X,S)= 0$.
 On the other hand, if the answer to \textsc{3-SAT}  is no, then all selections 
$X$, $|X|=p$,  must contain at least two items corresponding to contradictory literals or at least two literals from the same clause.
Note that  the recovery action under each scenario is limited to one.
So,  $\max_{S\in \mathcal{U}^{D}} f_2(X,S)\geq 1$. 
Accordingly,  \textsc{Recoverable Selection}
 with scenario set~$\mathcal{U}^{D}$
 is strongly NP-hard and not at all approximable unless P=NP. 
 \end{proof}

In the remaining part of this section we will provide a polynomial algorithm for the interval uncertainty representation.
The
 \textsc{Recoverable Selection} problem   with scenario set~$\mathcal{U}^{I}$ can be rewritten as follows:
\begin{align} 
\min_{X\in\Phi}
\left\{\sum_{e\in X} C_e  +    \max_{S\in \mathcal{U}^{I}}\min_{Y\in \Phi^{k}_{X}}\sum_{e\in Y}c^{S}_e \right\}=
\min_{X\in\Phi}
\left\{\sum_{e\in X} C_e  +    \min_{Y\in \Phi^{k}_{X}}\sum_{e\in Y}\overline{c}_e \right\}.
\label{irs}
\end{align}
 In problem~(\ref{irs}) we need to find a pair of solutions $X\in \Phi$ and $Y\in\Phi^k_X$. Since $|X|=|Y|=p$, the problem~(\ref{irs}) is equivalent to: 
\begin{equation}
 \begin{array}{ll}
 \min &\sum_{e\in X}C_e + \sum_{e\in Y}\overline{c}_e\\
\text{s.t. } & |X \cap Y| \geq p-k \\
       & X,Y\in \Phi
 \end{array}
 \label{mpsp}
\end{equation}
 In the following, for notation convenience, we will  use $C_i$ to denote $C_{e_i}$ and $\overline{c}_i$ to denote $\overline{c}_{e_i}$ for $i\in [n]$.
Let us introduce 0-1 variables $x_i$, $y_i$ and $z_i$, $i\in [n]$, 
that indicate the chosen parts of $X$, $Y$ and $X\cap Y$, respectively, $X,Y\in \Phi$.  
Problem~(\ref{mpsp}) can be then represented as the following IP model:
\begin{equation}
\label{mip0}
\begin{array}{rlllllll}
	opt_2=\min &\sum_{i=1}^n C_i x_i + \sum_{i=1}^n \overline{c}_i y_i + \sum_{i=1}^n (C_i+\overline{c}_i)z_i \\
\text{s.t. } & 	x_1+\dots + x_n+z_1+\dots + z_n=p  \\
	&y_1+\dots+y_n + z_1+\dots + z_n =p \\
	&z_1+\dots +z_n \geq p-k \\
	&x_i+z_i \leq 1 & i\in [n]  \\
	&y_i+z_i\leq 1 & i\in [n] \\
	& x_i, y_i,z_i\in \{0,1\} & i\in [n] \\
\end{array}
\end{equation}
Let $x_i, y_i, z_i$, $i\in[n]$ be an optimal solution to~(\ref{mip0}). Then $e_i\in X$ if $x_i=1$ or $z_i=1$, and $e_i\in Y$ if $y_i=1$ or $z_i=1$.
The following theorem shows the unimodularity property of the constraint matrix of~(\ref{mip0}).
\begin{thm}
\label{thmtu1}
	The constraint matrix of~(\ref{mip0}) is totally unimodular.
\end{thm}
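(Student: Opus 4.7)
The plan is to invoke Ghouila-Houri's characterization of total unimodularity: a $\{0,1\}$-matrix $M$ is TU if and only if for every subset $R$ of its rows there is a signing $\sigma\colon R\to\{-1,+1\}$ such that $\sum_{r\in R}\sigma_r m_{rj}\in\{-1,0,+1\}$ for every column $j$. I would label the rows of~(\ref{mip0}) as $A$ (the equation $\sum_i x_i+\sum_i z_i=p$), $B$ (the equation $\sum_i y_i+\sum_i z_i=p$), $C$ (the inequality $\sum_i z_i\ge p-k$), and, for each $j\in[n]$, $D_j$ (the inequality $x_j+z_j\le 1$) and $E_j$ (the inequality $y_j+z_j\le 1$). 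For an arbitrary $R$, write $\rho_A,\rho_B,\rho_C\in\{0,1\}$ for the indicators of $A,B,C$ being in $R$.

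The key structural observation is that for each $j\in[n]$ the signs $\sigma_{D_j},\sigma_{E_j}$ appear only in the column constraints for $x_j$, $y_j$, $z_j$, and those involve only rows from $\{A,B,C,D_j,E_j\}$. Consequently, once $\sigma_A,\sigma_B,\sigma_C$ are fixed for the cardinality rows present in $R$, the problem decouples entirely across $j$. The columns $x_j$ and $y_j$ contribute the constraints $\rho_A\sigma_A+\rho_{D_j}\sigma_{D_j}\in\{-1,0,+1\}$ and $\rho_B\sigma_B+\rho_{E_j}\sigma_{E_j}\in\{-1,0,+1\}$, which force $\sigma_{D_j}=-\sigma_A$ when $\rho_A=\rho_{D_j}=1$ and $\sigma_{E_j}=-\sigma_B$ when $\rho_B=\rho_{E_j}=1$ (leaving those signs free otherwise), while the column $z_j$ contributes
\[
\rho_A\sigma_A+\rho_B\sigma_B+\rho_C\sigma_C+\rho_{D_j}\sigma_{D_j}+\rho_{E_j}\sigma_{E_j}\in\{-1,0,+1\}.
\]

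The proof then splits into the eight cases $(\rho_A,\rho_B,\rho_C)\in\{0,1\}^3$. In each case I would prescribe explicit values of the present $\sigma_A,\sigma_B,\sigma_C$, adopt the forced values of $\sigma_{D_j},\sigma_{E_j}$ where they are pinned, set the remaining free signs consistently, and then verify for each of the four patterns $(\rho_{D_j},\rho_{E_j})\in\{0,1\}^2$ that the three column sums lie in $\{-1,0,+1\}$. For example, in case $(1,1,1)$ the choice $\sigma_A=\sigma_B=+1$, $\sigma_C=-1$, $\sigma_{D_j}=\sigma_{E_j}=-1$ makes the $z_j$ sum equal $1-\rho_{D_j}-\rho_{E_j}$; in case $(1,1,0)$ the choice $\sigma_A=+1$, $\sigma_B=-1$, $\sigma_{D_j}=-1$, $\sigma_{E_j}=+1$ makes the $z_j$ sum equal $\rho_{E_j}-\rho_{D_j}$; the remaining cases are analogous.

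The main obstacle is the case $(\rho_A,\rho_B,\rho_C)=(1,1,1)$, which dictates the sign choice above. When some $j$ has $\rho_{D_j}=\rho_{E_j}=0$, the $z_j$ constraint degenerates to $\sigma_A+\sigma_B+\sigma_C\in\{-1,0,+1\}$; since each summand is $\pm 1$, this rules out all three signs being equal. When in addition some $j$ has $(\rho_{D_j},\rho_{E_j})=(1,0)$ (respectively $(0,1)$), the forced cancellation $\sigma_{D_j}=-\sigma_A$ (respectively $\sigma_{E_j}=-\sigma_B$) turns the $z_j$ constraint into $\sigma_B+\sigma_C\in\{-1,0,+1\}$ (respectively $\sigma_A+\sigma_C\in\{-1,0,+1\}$), forcing $\sigma_B=-\sigma_C$ (respectively $\sigma_A=-\sigma_C$). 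The only joint solution is $\sigma_A=\sigma_B=-\sigma_C$, which is the prescription above and which also keeps the $(\rho_{D_j},\rho_{E_j})=(1,1)$ contribution equal to $\sigma_C\in\{-1,+1\}$. Once this case is handled, the remaining seven are entirely routine sign choices, and Ghouila-Houri's condition yields the claim.
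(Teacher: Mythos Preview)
Your proof is correct and follows essentially the same route as the paper: both invoke Ghouila-Houri's criterion, split into the eight cases according to which of the three cardinality rows lie in $R$, and prescribe explicit signings (your case $(1,1,1)$ with $\sigma_A=\sigma_B=+1$, $\sigma_C=-1$, $\sigma_{D_j}=\sigma_{E_j}=-1$ and case $(1,1,0)$ with $\sigma_A=+1$, $\sigma_B=-1$, $\sigma_{D_j}=-1$, $\sigma_{E_j}=+1$ match the paper's Cases~8 and~5 exactly). Your explicit decoupling observation across $j$ and the motivating analysis of why $(1,1,1)$ forces $\sigma_A=\sigma_B=-\sigma_C$ are nice additions, but the argument is the same.
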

\begin{proof}
We will use the following
Ghouira-Houri's characterization of  totally unimodular matrices~\cite{GH62}. An $m \times n$ integral matrix is totally unimodular if and only if  each set $R\subseteq [m]$ can be partitioned into two disjoint sets $R_1$ and $R_2$  such that
\begin{equation}
	\sum_{i\in R_1} a_{ij}-\sum_{i\in R_2} a_{ij}\in \{-1,0,1\}, \;\; j\in [n].
\end{equation}
This criterion can alternatively be stated as follows.  An $m \times n$ integral matrix is totally unimodular if and only if for any subset of rows $R=\{r_1,\dots,r_l\}\subseteq [m]$ there exists a coloring of rows of $R$, with 1 or -1, i.e. $l(r_i)\in\{-1,1\}$, $r_i\in R$, such that the weighted sum of every column (while restricting the sum to rows in $R$) is $-1$, $0$ or $1$.
The constraint matrix of~(\ref{mip0}) is shown in Table~\ref{tcmmip}.
\begin{table}[h]
	\caption{The constraint matrix of~(\ref{mip0}).} \label{tcmmip}
\setlength{\tabcolsep}{3pt}
\centering
\begin{tabular}{c|ccccc|ccccc|ccccc|}
& $x_1$ & $x_2$ & $x_3$ & \dots & $x_n$ & $y_1$ & $y_2$ & $y_3$  & \dots & $y_n$  & $z_1$ & $z_2$ & $z_3$  & \dots & $z_n$\\ \hline
$a_1$: & 1 & 1 & 1  & $\dots$ & 1 & 0 & 0 & 0 & $\dots$ & 0& 1 & 1 & 1 &  $\dots$ & 1 \\
$a_2$: & 0 & 0 & 0  & $\dots$ & 0 & 1 & 1 & 1  & $\dots$ & 1 & 1 & 1 & 1 & $\dots$ & 1\\
$a_3$: & 0 & 0 & 0  & $\dots$ & 0 & 0 & 0 & 0  & $\dots$ & 0 & 1 & 1 & 1 &  $\dots$ & 1\\ \hline
$b_1$: &1 &  0 & 0  & $\dots$ & 0 & 0 & 0 & 0  & $\dots$ & 0 & 1 &0 & 0 &  $\dots$ & 0\\
$b_2$: &0 &  1 & 0  & $\dots$ & 0 & 0 & 0 & 0  & $\dots$ & 0 & 0 &1 & 0 &  $\dots$ & 0\\
$b_3$: &0 &  0 & 1 & $\dots$ & 0 & 0 & 0 & 0  & $\dots$ & 0 & 0 &0 & 1 &  $\dots$ & 0\\
$\vdots$ &$\vdots$ &  $\vdots$  & \vdots &  & $\vdots$ & $\vdots$ &  $\vdots$ & $\vdots$ &  & $\vdots$ &$\vdots$ & $\vdots$ & $\vdots$ &  & $\vdots$\\
$b_n$: &0 &  0 & 0  & $\dots$ & 1 & 0 & 0 & 0  & $\dots$ & 0 & 0 &0 & 0 &  $\dots$ & 1\\ \hline
$c_1$: &0 &  0 & 0  & $\dots$ & 0 & 1 & 0 & 0  & $\dots$ & 0 & 1 &0 & 0 &  $\dots$ & 0\\
$c_2$: &0 &  0 & 0  & $\dots$ & 0 & 0 & 1 & 0  & $\dots$ & 0 & 0 &1 & 0 &  $\dots$ & 0\\
$c_3$: &0 &  0 & 0  & $\dots$ & 0 & 0 & 0 & 1  & \dots & 0 & 0  &0 & 1 &  $\dots$ & 0\\
$\vdots$ &$\vdots$ &$\vdots$& $\vdots$ &   & $\vdots$ & $\vdots$ &  $\vdots$ & $\vdots$ & & $\vdots$ &$\vdots$ & $\vdots$ & $\vdots$ &  & $\vdots$\\
$c_n$: &0 &  0 & 0  & $\dots$ & 0 & 0 & 0 & 0  & $\dots$ & 1 & 0 &0 & 0 & $\dots$ & 1\\
\end{tabular}
\end{table}
Consider a subset of rows $R=A\cup B \cup C$, where $A\subseteq \{a_1,a_2,a_3\}$, $B\subseteq \{b_1,\dots,b_n\}$, $C\subseteq \{c_1,\dots,c_n\}$. We examine the following cases and for each of them we show a valid coloring.
\begin{enumerate}
	\item $A=\emptyset$ . Then $l(b_i)=1$ for $b_i\in B$ and $l(c_i)=-1$ for $c_i\in C$.
	\item $A=\{a_1\}$. Then $l(a_1)=1$, $l(b_i)=-1$ for $b_i\in B$ and $l(c_i)=-1$ if $c_i\in C$.
	\item $A=\{a_2\}$. Symmetric to Case 2.
	\item $A=\{a_3\}$. Then $l(a_3)=1$, $l(b_i)=-1$ for $b_i\in B$ and $l(c_i)=-1$ if $c_i\in C$.
	\item $A=\{a_1,a_2\}$. Then $l(a_1)=1$, $l(a_2)=-1$, $l(b_i)=-1$ for $b_i\in B$ and $l(c_i)=1$ for $c_i\in C$.
	\item $A=\{a_1,a_3\}$. Then $l(a_1)=1$, $l(a_3)=-1$,  $l(b_i)=-1$ for $b_i\in B$ and $l(c_i)=1$ for $c_i\in C$.
	\item $A=\{a_2,a_3\}$. Symmetric to Case 6.
	\item $A=\{a_1,a_2,a_3\}$. Then $l(a_1)=1$, $l(a_2)=1$, $l(a_3)=-1$, $l(b_i)=-1$ for $b_i\in B$, $l(c_i)=-1$ for $c_i\in C$.
\end{enumerate}
\end{proof}  
 From Theorem~\ref{thmtu1} we 
immediately get that every extreme solution of~(\ref{mip0}), after removing the integrality constraints, is integral and in consequence 
  \textsc{Recoverable Selection} for 
  the interval uncertainty representation 
  is polynomially solvable. 
  Our goal is now to construct an efficient  combinatorial algorithm for this problem. 
  In order to do this, we first apply  Lagrangian relaxation (see, e.g.,~\cite{AMO93}) to~(\ref{mip0}). 
  Relaxing  the cardinality constraint   $\sum_{i\in[n]}z_i \geq p-k$
  with a nonnegative multiplier~$\theta$, we obtain 
  the following linear programming problem:
\begin{equation}
\label{dual1}
\begin{array}{rllllll}
	\phi(\theta)= \min & \sum_{i=1}^n C_i x_i + \sum_{i=1}^n \overline{c}_i y_i + \sum_{i=1}^n (C_i+\overline{c}_i-\theta)z_i +(p-k)\theta\\
	\text{s.t. } & x_1+\dots + x_n+z_1+\dots + z_n=p \\
	& y_1+\dots+y_n + z_1+\dots + z_n =p  \\
	& x_i+z_i \leq 1 & i\in [n] \\
	& y_i+z_i\leq 1 & i\in [n] \\
	& x_i, y_i, z_i \geq  0 & i\in [n] \\
\end{array}
\end{equation}  
The Lagrangian function~$\phi(\theta)$  for
any $\theta\geq 0$ is a lower bound on the optimum value~$opt_2$.
It is well-known that $\phi(\theta)$ is concave and piecewise linear function.
 We now find a nonnegative  multiplier~$\theta$ together with an optimal solution $x^*_i,y^*_i,z^*_i$, $i\in[n]$, to ~(\ref{dual1})  which is also feasible in~(\ref{mip0}) and satisfies the complementary slackness condition,
  i.e.  $\theta((p-k)-\sum_{i\in[n]}z_i)=0$. By  the optimality test, 
  such a solution is optimal  to the original problem~(\ref{mip0}).  We will do this by iteratively increasing the value of $\theta$, starting with $\theta=0$.
  For $\theta=0$ the following lemma holds:
\begin{lem}
\label{lfsol}
The value of $\phi(0)$ can be computed in $O(n)$ time.
\end{lem}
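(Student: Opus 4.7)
The plan is to exhibit a change of variables that decouples the LP~(\ref{dual1}) at $\theta = 0$ into two independent deterministic selection LPs, each solvable in linear time.

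The key algebraic observation I would use is the identity
\begin{equation*}
\sum_{i=1}^n C_i x_i + \sum_{i=1}^n \overline{c}_i y_i + \sum_{i=1}^n (C_i + \overline{c}_i) z_i = \sum_{i=1}^n C_i(x_i + z_i) + \sum_{i=1}^n \overline{c}_i (y_i + z_i),
\end{equation*}
which holds because the coefficient of $z_i$ in the objective at $\theta=0$ is exactly the sum of the coefficients of $x_i$ and $y_i$. Introducing the aggregated variables $\hat{x}_i := x_i + z_i$ and $\hat{y}_i := y_i + z_i$, the cardinality constraints $\sum_i x_i + \sum_i z_i = p$ and $\sum_i y_i + \sum_i z_i = p$ become $\sum_i \hat{x}_i = p$ and $\sum_i \hat{y}_i = p$, while the side constraints $x_i + z_i \le 1$ and $y_i + z_i \le 1$ turn into $\hat{x}_i \le 1$ and $\hat{y}_i \le 1$. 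Conversely, any $(\hat{x}, \hat{y})$ satisfying these reduced constraints is realized in~(\ref{dual1}) by setting $x_i = \hat{x}_i$, $y_i = \hat{y}_i$, $z_i = 0$, with the same objective value. Hence $\phi(0)$ equals the sum of the optima of two independent LPs, each of the form $\min\{\sum_i \gamma_i u_i : \sum_i u_i = p,\ 0 \le u_i \le 1\}$, with cost vectors $(C_i)_{i\in[n]}$ and $(\overline{c}_i)_{i\in[n]}$ respectively.

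Each of these two LPs is the relaxation of the deterministic \textsc{Selection} problem and admits an integral optimum that selects the $p$ items of smallest cost. As recalled in Section~\ref{sintro}, the linear-time selection algorithm of~\cite{CO90} locates the $p$-th smallest cost and hence the minimizing set in $O(n)$ time. Solving both LPs and summing their optima therefore yields $\phi(0)$ in $O(n)$ time. No genuine obstacle arises here: the argument rests entirely on the separability identity above and the classical $O(n)$ selection routine.
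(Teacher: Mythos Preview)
Your proof is correct and follows essentially the same idea as the paper: at $\theta=0$ the objective separates as $\sum_i C_i(x_i+z_i)+\sum_i \overline{c}_i(y_i+z_i)$, so $\phi(0)$ is the sum of two independent \textsc{Selection} problems, each solvable in $O(n)$ time. The only minor difference is in the feasible solution realizing the bound: you set $z_i=0$ everywhere, whereas the paper sets $z_i=1$ for $e_i\in X\cap Y$, $x_i=1$ for $e_i\in X\setminus Y$, $y_i=1$ for $e_i\in Y\setminus X$; this distinction is immaterial for the lemma itself, but the paper's construction is later invoked to establish Property~\ref{parp} for $\theta=0$, which your all-$z_i=0$ solution would not satisfy when the two optimal $p$-sets overlap.
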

\begin{proof}
Let $X$ be a set of $p$ items of the smallest values of $C_i$ and let $Y$ be the set of $p$ items
 of the smallest $\overline{c}_i$, $e_i\in E$.
 Clearly, $\hat{c}=\sum_{e_i\in X} C_i+\sum_{e_i\in Y} \overline{c}_i$ is a lower bound on 
 $\phi(0)$. A feasible solution of the cost $\hat{c}$ can be obtained by setting  $z_i=1$ for $e_i\in X\cap Y$, 
 $x_i=1$ for $e_i\in X\setminus Y$ and $y_i=1$ for $e_i\in Y\setminus X$. The sets $X$ and 
 $Y$ can be found in $O(n)$ time (see the comments in Section~\ref{sintro}),
 and the lemma follows.
\end{proof}

Given a  $0-1$ optimal solution to~$(\ref{dual1})$ for a fixed $\theta\geq 0$ (such a solution must exist due to Theorem~\ref{thmtu1}), let 
$E_X=\{e_i\in E: x_i=1\}$, $E_Y=\{e_i\in E: y_i=1\}$, $E_Z=\{e_i\in E: z_i=1\}$. 
For $\theta>0$ the sets $E_X$, $E_Y$ and $E_Z$ 
are pairwise disjoint and thus form a partition of the set $X\cup Y$ into $X\setminus Y$,
$Y\setminus X$ and $X\cap Y$, respectively.
  Indeed, $E_X\cap E_Z=\emptyset$ and $E_Y\cap E_Z=\emptyset$ by the constraints of~(\ref{dual1}). 
  If $e_i\in E_X\cap E_Y$, then we can find a better solution by setting $z_i=1$, $x_i=0$ and $y_i=0$.
The same property holds for the optimal solution when $\theta=0$ (see the construction in
the proof of Lemma~\ref{lfsol}). Let us state the above reasoning as the following property: 
\begin{pro}
\label{parp}
	For each $\theta\geq 0$ there is an optimal solution $x_i,y_i,z_i$, $i\in [n]$, to~(\ref{dual1}) such that 
	$E_X$, $E_Y$ and $E_Z$  form a partition of the set $X\cup Y$ into $X\setminus Y$,
         $Y\setminus X$ and $X\cap Y$, respectively.
\end{pro}
From now on, we will represent an optimal solution to~(\ref{dual1}) for any $\theta\geq 0$ as  a triple 
$(E_X,E_Y,E_Z)$ that has
Property~\ref{parp}.
 It is easily seen that~(\ref{dual1}) is equivalent to the minimum cost flow problem in the network $G_\theta$ shown in Figure~\ref{net}a. All the arcs in $G_\theta$ have capacities equal to~1. The costs of arcs $(s,x_i)$ and $(y_i,t)$ for $i\in [n]$
 are equal to 0. The costs of the arcs $(x_i,y_i)$ for $i\in [n]$ are $C_i+\overline{c}_i-\theta$, and the costs of the arcs $(x_i,y_j)$, $i\neq j$, are $C_i+\overline{c}_j$. The supply at the source~$t$ is of~$p$ and the demand at
 the sink~$t$ is of $-p$.
 
 \begin{figure}[h]
\centering
\includegraphics{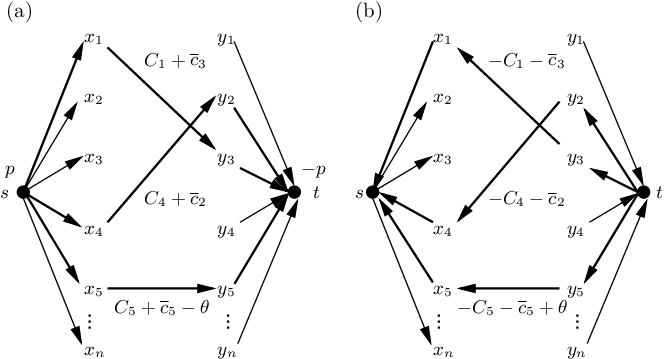}
\caption{(a) Network $G_\theta$ (all the arc capacities are equal to~1, not all arcs between $x_i$ and $y_j$ are shown). (b)  Residual network $G^r_\theta$ for $E_X=\{e_1,e_4\}, E_Y=\{e_2,e_3\}, E_Z=\{e_5\}$ (the arcs leading from $x_i$ to $y_j$ are not shown).} \label{net}
\end{figure}

  Let $(E_X,E_Y,E_Z)$
  be an optimal solution to~(\ref{dual1}). The corresponding integer optimal flow $f_\theta$ in $G_\theta$ is constructed as follows. We send 1 unit of flow along arcs $(s,x_i)$, $(x_i,y_i)$ and $(y_i,t)$ if $e_i\in E_Z$; we then pair the items from $e_i\in E_X$ and $e_j\in E_Y$ in any fashion and send 1 unit  of flow along the arcs 
  $(s,x_i)$, $(x_i,y_j)$ and $(y_j,t)$ for each such a pair $(e_i,e_j)$. An example for 
  $E_Z=\{e_5\}$, $E_X=\{e_1,e_4\}$ and $E_Y=\{e_2,e_3\}$ is shown in Figure~\ref{net}a, where
  $X=\{e_1,e_4,e_5\}$, $Y=\{e_2,e_3,e_5\}$, $p=3$.
   Assume that $f_\theta$ is an optimal  flow in $G_\theta$. We can assume that this flow is integer by 
   the  integrality property  of optimal solutions to the minimum cost flow problem (see, e.g.,~\cite{AMO93}). Let $A\subseteq [n]$ be the set of indices of all nodes
   $x_i$ which receive 1 unit of flow from $s$ and let $B \subseteq [n]$ be the set of indices of nodes $y_j$ which send 1 unit of flow to $t$ in $f_\theta$. Clearly $|A|=|B|=p$. 
   Let $E_Z=\{e_i\in E : i\in A\cap B\}$, $E_X=\{e_i\in E : i\in A\setminus B\}$, 
   $E_Y=\{e_i\in E : i\in B\setminus A\}$.
    It is easy to see that the cost of the resulting feasible solution
    $(E_X,E_Y,E_Z)$
     to~(\ref{dual1}) is the same as the cost of $f_\theta$.

Consider now an optimal solution $(E_X,E_Y,E_Z)$ to~(\ref{dual1}) for some fixed $\theta\geq 0$. 
Without loss  of generality we can assume that this optimal solution has  Property~\ref{parp}.
 Let $f_\theta$ be the corresponding optimal flow in $G_\theta$. The residual network 
 $G^r_\theta$ with respect to $f_\theta$ is depicted in Figure~\ref{net}b. By the negative cycle optimality condition (see, e.g.,~\cite{AMO93}), 
  $G^r_\theta$ does not contain any negative cost directed cycle. Suppose we increase $\theta$ in $G^r_\theta$. Then, a negative cost directed cycle may appear in  $G^r_{\theta}$, which means that the flow $f_\theta$ becomes not optimal in $G_{\theta}$. We now investigate the structure of such negative cycles. This will allow us to find the largest value of $\theta$ for which the flow $f_{\theta}$ (and  thus the corresponding solution to~(\ref{dual1})) remains optimal. 

Denote by $\mathcal{F}$ the set of arcs of the form $(x_i,y_i)$ in $G^r_{\theta}$ (the \emph{forward} arcs) and by 
$\mathcal{B}$ the set of arcs of the form $(y_i,x_i)$ in $G^r_{\theta}$ (the \emph{backward} arcs). These arcs will play a crucial role as only their costs depend on $\theta$ in $G^r_\theta$. Clearly, the costs of the arcs in 
$\mathcal{F}$ decrease and the costs of the arcs in $\mathcal{B}$ increase when the value of~$\theta$ increases.  In the example shown in Figure~\ref{net}b the arc $(y_5,x_5)$ is a backward arc and all the remaining arcs $(x_i,y_i)$ are forward arcs. 

\begin{figure}[h]
\centering
\includegraphics{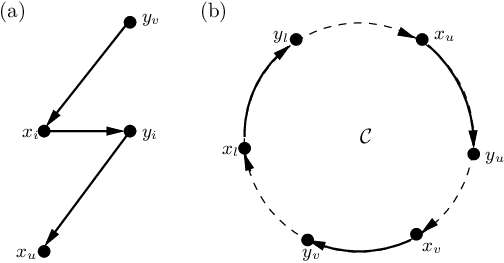}
\caption{(a) A path which cannot appear in $G^r_\theta$. (b) A cycle $\mathcal{C}$ in  $G^r_\theta$ which
  contains  three arcs from $\mathcal{F}$
  (the dashed arcs represent paths between nodes that may traverse $s$ or $t$).} \label{cycle}
\end{figure}

We start by establishing the following  lemma: 
\begin{lem}
\label{obseg}
The residual network $G^r_\theta$ does not contain a path  composed of arcs $(y_v,x_i)$, $(x_i,y_i)$ and $(y_i,x_u)$, where $i,u,v\in [n]$ and $(x_i,y_i)\in \mathcal{F}$.
\end{lem}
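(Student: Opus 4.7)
The plan is to argue by contradiction: assume that a path $(y_v, x_i), (x_i, y_i), (y_i, x_u)$ with $(x_i, y_i) \in \mathcal{F}$ exists in $G^r_\theta$, and derive a violation of Property~\ref{parp}. The idea is simply to decode each residual arc into a statement about the integer optimal flow $f_\theta$, and hence about which of the sets $E_X$, $E_Y$, $E_Z$ the item $e_i$ belongs to.

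The first step is to observe that all arcs of $G_\theta$ (apart from source/sink arcs) go from $x$-nodes to $y$-nodes. Hence $(y_v, x_i)$ and $(y_i, x_u)$ must be reverse residual arcs, which, together with integrality of $f_\theta$ and unit capacities, forces exactly one unit of flow on $(x_i, y_v)$ and one unit on $(x_u, y_i)$. The assumption $(x_i, y_i) \in \mathcal{F}$ means that the forward direction of that arc is still present in the residual, so the flow on $(x_i, y_i)$ is zero. In particular we can exclude the degenerate cases $v = i$ and $u = i$, since either would force a unit of flow on $(x_i, y_i)$.

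The second step is to apply flow conservation together with the unit capacities of $(s, x_i)$ and $(y_i, t)$. Because $x_i$ sends a unit to $y_v$ with $v \neq i$, conservation forces a unit on $(s, x_i)$ and, by capacity, $(x_i, y_v)$ is the only outgoing arc of $x_i$ carrying flow; in terms of the LP this corresponds to $x_i = 1$, $z_i = 0$, i.e. $e_i \in E_X$. Symmetrically, the unit of flow on $(x_u, y_i)$ with $u \neq i$ forces a unit on $(y_i, t)$, giving $y_i = 1$, $z_i = 0$, i.e. $e_i \in E_Y$. This contradicts $E_X \cap E_Y = \emptyset$ from Property~\ref{parp}.

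The main obstacle is being careful with the bookkeeping when translating residual arcs back into flow values, and in particular ruling out the degenerate cases $v = i$ and $u = i$ before invoking conservation. Once that is done, the contradiction with Property~\ref{parp} is essentially immediate.
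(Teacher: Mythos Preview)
Your proof is correct and follows the same approach as the paper: both argue that the three residual arcs force $e_i\in E_X$ and $e_i\in E_Y$ simultaneously, contradicting Property~\ref{parp}. The paper's proof is a single sentence stating this conclusion directly from the explicit construction of $f_\theta$ out of $(E_X,E_Y,E_Z)$, whereas you spell out the translation via flow conservation and unit capacities (and explicitly dispose of the degenerate cases $u=i$, $v=i$); this extra bookkeeping is sound but not strictly needed given how $f_\theta$ is defined.
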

\begin{proof}
If such a path exists (see Figure~\ref{cycle}a), then the flow~$f_\theta$ corresponds to  solution $(E_X,E_Y,E_Z)$ in which $x_i=1$ and $y_i=1$. In consequence $E_X\cap E_Y\neq \emptyset$ which violates Property~\ref{parp}.
\end{proof}

The following lemma is crucial in investigating the structure of cycles in $G^r_\theta$.
\begin{lem}
	\label{lsc}
	Each simple cycle in $G^r_\theta$ contains at most two arcs from $\mathcal{F}$.
\end{lem}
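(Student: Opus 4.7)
The plan is to argue by contradiction: assume a simple cycle $\mathcal{C}$ in $G^r_\theta$ contains at least three arcs from $\mathcal{F}$, say $(x_{i_1},y_{i_1}),(x_{i_2},y_{i_2}),(x_{i_3},y_{i_3})$, and derive a forbidden configuration from Lemma~\ref{obseg}. The proof will not require any use of the actual cost $\theta$ or the numerical values in $G^r_\theta$; it is purely a structural argument about which residual arcs can be adjacent to a forward arc in $\mathcal{C}$.

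First I would enumerate the possible predecessors and successors of each forward arc. By the construction of $G^r_\theta$ from $G_\theta$, the only arcs of $G^r_\theta$ entering a node $x_i$ are $(s,x_i)$ (if the arc $(s,x_i)$ of $G_\theta$ is not saturated by $f_\theta$) and backward arcs $(y_l,x_i)\in\mathcal{B}\cup\{\text{off-diagonal}\}$ (if $(x_i,y_l)$ carries flow). Symmetrically, the only arcs leaving a node $y_j$ are $(y_j,t)$ and backward arcs $(y_j,x_l)$. Hence, for each of the three forward arcs, the arc preceding it in $\mathcal{C}$ is either of the form $(s,x_{i_k})$ or a backward arc, and the arc following it is either of the form $(y_{i_k},t)$ or a backward arc.

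Next I would invoke simplicity of $\mathcal{C}$: the nodes $s$ and $t$ each appear at most once, so at most one of the three forward arcs is preceded by $(s,x_{i_k})$, and at most one is followed by $(y_{i_k},t)$. Consequently, at least two of the three forward arcs are preceded by a backward arc of the form $(y_{j_k},x_{i_k})$, and at least two are followed by a backward arc of the form $(y_{i_k},x_{u_k})$. Since $2+2>3$, a simple pigeonhole argument shows that some index $k^*\in\{1,2,3\}$ satisfies both conditions.

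The main (and essentially only) obstacle is getting the case analysis of incident residual arcs right; once that classification is in hand, the conclusion is immediate. For this $k^*$ the cycle $\mathcal{C}$ contains the sub-path $(y_{j_{k^*}},x_{i_{k^*}}),(x_{i_{k^*}},y_{i_{k^*}}),(y_{i_{k^*}},x_{u_{k^*}})$ with $(x_{i_{k^*}},y_{i_{k^*}})\in\mathcal{F}$, which directly contradicts Lemma~\ref{obseg}. This contradiction yields the bound of at most two forward arcs, proving the lemma.
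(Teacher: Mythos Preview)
Your proof is correct and follows essentially the same approach as the paper: a contradiction argument that locates a forward arc in $\mathcal{C}$ sandwiched between two arcs of the form $(y_v,x_i)$ and $(y_i,x_u)$, thereby violating Lemma~\ref{obseg}. The paper phrases this as a four-case split according to whether $\mathcal{C}$ passes through $s$ and/or $t$, while you replace that case split by the pigeonhole count ``at most one predecessor through $s$, at most one successor through $t$, hence some forward arc has neither''; this is the same argument, packaged slightly more cleanly.
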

\begin{proof}
Assume, contrary to our claim, that
 there exists a cycle $\mathcal{C}$ in  $G^r_\theta$ that
  contains at least three arcs from $\mathcal{F}$ (see Figure~\ref{cycle}b). This cycle may (or may not) contain nodes $s$ and $t$. However, in all possible four cases the cycle $\mathcal{C}$ must violate Lemma~\ref{obseg}, i.e. it must contain a path of the form shown in Figure~\ref{cycle}a.
\end{proof}

Suppose  we increase $\theta$ to some value $\theta'>\theta$ in $G^r_{\theta}$ and denote the resulting residual network as $G^r_{\theta'}$. Assume that a negative cycle $\mathcal{C}$ appears in $G^r_{\theta'}$. It is obvious that
 $\mathcal{C}$ 
 must contain at least one arc from $\mathcal{F}$, since only the costs of these arcs decrease
  when $\theta$ increases. By Lemma~\ref{lsc}, the cycle $\mathcal{C}$ 
  contains either one or two arcs from $\mathcal{F}$.
  
  Consider first the case when $\mathcal{C}$ contains exactly one arc from 
  $\mathcal{F}$, say $(x_u,y_u)$. 
  Then $\mathcal{C}$ 
  cannot contain any arc from $\mathcal{B}$. Otherwise, when computing the cost of $\mathcal{C}$, the value of $\theta'$ would 
   be canceled
   and $\mathcal{C}$ is a negative cycle in $G^r_\theta$, a contradiction. 
  All the possible cases are shown in Figure~\ref{allcyc}. Notice that in the case~(b),
  by Lemma~\ref{obseg},
   the arc $(y_u,t)$ 
  must belong to $\mathcal{C}$.
The dashed arcs represent paths between nodes, say $y_a$ and $x_b$, that do not use any arcs from $\mathcal{F}$ and 
$\mathcal{B}$ and none of the nodes $s$ or $t$. 	
An easy computation shows that the cost of such paths equals $-C_a-\overline{c}_b$ as the remaining terms are canceled while traversing this path.

\begin{figure}[h]
\centering
\includegraphics{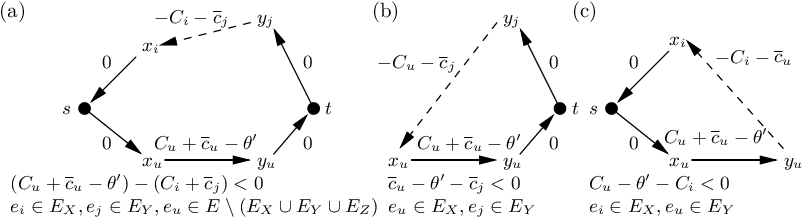}
\caption{All possible situations when $\mathcal{C}$ contains only one arc from $\mathcal{F}$.} \label{allcyc}
\end{figure}

We now turn to the case when $\mathcal{C}$ contains two arcs from 
$\mathcal{F}$, say $(x_l,y_l)$ and $(x_m,y_m)$. It 
is easy to check that in
this case $\mathcal{C}$ may also contain an arc from $\mathcal{B}$, but at most one such arc. The cycle 
$\mathcal{C}$ must traverse $s$ and $t$.
Accordingly,
 it  is  of the form presented in Figure~\ref{sumcyc}a,
 which is a consequence of Lemma~\ref{obseg} and Lemma~\ref{lsc}.
 \begin{figure}[h]
\centering
\includegraphics{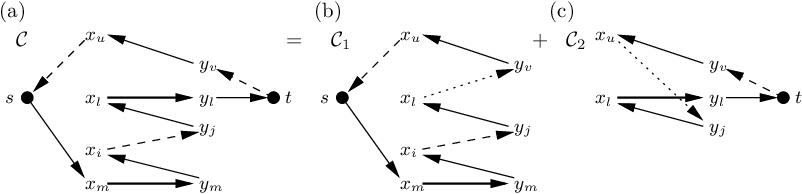}
\caption{A situation when $\mathcal{C}$ contains two arcs from $\mathcal{F}$ and $u\neq v$.} \label{sumcyc}
\end{figure}
Assume that a path from $t$ to $s$ in $\mathcal{C}$ uses  arc $(y_v,x_u)$ such that $u\neq v$ 
(see Figure~\ref{sumcyc}a).
 Let us add to $\mathcal{C}$
  two arcs $(x_u,y_j)$ and $(x_l,y_v)$ whose costs are $C_u+\overline{c}_j$ and 
  $C_l+\overline{c}_v$, respectively
   (see Figure~\ref{sumcyc}a).
   A trivial verification shows that
    the cost of the cycle $x_u \rightarrow y_j \rightarrow x_l \rightarrow y_v \rightarrow x_u$ is 0.
  Hence the cost of $\mathcal{C}$ is the sum of the costs of two disjoint cycles $\mathcal{C}_1$ 
  and $\mathcal{C}_2$ (see Figures~\ref{sumcyc}b and~\ref{sumcyc}c).
   Since the cost of $\mathcal{C}$ is negative, 
   the cost of at least one of  $\mathcal{C}_1$ and  $\mathcal{C}_2$ must be negative.
    If no arc from $\mathcal{B}$ belongs to $\mathcal{C}$, then the cycle $\mathcal{C}_1$ is of the form depicted in
     Figure~\ref{allcyc}c and $\mathcal{C}_2$ is of the form given in Figure~\ref{allcyc}b. 
      Now suppose that $\mathcal{C}$ contains one arc from $\mathcal{B}$. If that arc is in 
       path from $x_i$ to $y_j$ or from $x_u$ to $s$, then it belongs to cycle $\mathcal{C}_1$.
        In this case $\mathcal{C}_2$ is negative and is of the form shown in Figure~\ref{allcyc}b. 
        If that arc is in path from $t$ to $y_v$, then we get a symmetric case.  
      Consequently, the case $u\neq v$, can be reduced to the cases presented in Figure~\ref{allcyc}.
The last case of cycle $\mathcal{C}$, which cannot be reduced to any cases previously discussed, and thus must be treated separately, is shown in Figure~\ref{cyc2}. Observe that in this case contains two arcs from 
$\mathcal{F}$ and one arc from $\mathcal{B}$.
\begin{figure}[h]
\centering
\includegraphics{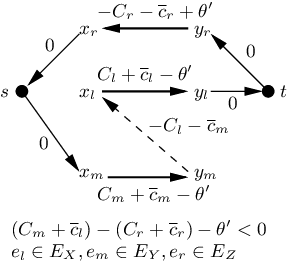}
\caption{A situation when $\mathcal{C}$ contains two arcs from 
$\mathcal{F}$ and one arc from $\mathcal{B}$.} \label{cyc2}
\end{figure}

We are thus led to the following optimality conditions.
\begin{lem}
\label{toptcon}
	Let $(E_X,E_Y,E_Z)$ be an optimal solution to $(\ref{dual1})$ for 
	$\theta\geq 0$. Then $(E_X,E_Y,E_Z)$  remains optimal for all $\theta'>\theta$ 
	which satisfy the following inequalities:
	\begin{align}
	  &C_u+\overline{c}_u - C_i-\overline{c}_j \geq \theta' &\text{for }&
         e_i\in E_X, e_j \in E_Y, e_u\in E \setminus (E_X\cup E_Y\cup E_Z), \label{tcon1}\\
         &\overline{c}_u-\overline{c}_j \geq \theta'  &\text{for } &e_u\in E_X, e_j \in E_Y,  \label{tcon2}\\
         &C_u-C_i \geq \theta'  &\text{for } & e_i\in E_X, e_u \in E_Y, \label{tcon3}\\
         &C_m+\overline{c}_l - (C_r+\overline{c}_r) \geq \theta'  &\text{for } &
         e_l\in E_X, e_m \in E_Y, e_r \in E_Z. \label{tcon4}
	\end{align}
\end{lem}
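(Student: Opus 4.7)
The plan is to invoke the negative cycle optimality condition for minimum cost flows: the flow $f_\theta$ associated with $(E_X, E_Y, E_Z)$ is still optimal in $G_{\theta'}$, and hence $(E_X, E_Y, E_Z)$ itself is still optimal in~(\ref{dual1}), if and only if the residual network $G^r_{\theta'}$ has no negative-cost directed cycle. The proof therefore reduces to enumerating every simple directed cycle in $G^r_{\theta'}$ that could have become negative when $\theta$ is raised to $\theta'$, computing its cost, and setting that cost $\geq 0$.

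First I would observe that only arcs in $\mathcal{F}$ have costs that decrease when $\theta$ grows, so any such cycle must contain at least one $\mathcal{F}$ arc; by Lemma~\ref{lsc} it contains at most two. In the one-$\mathcal{F}$-arc case, with arc $(x_u, y_u)$, no $\mathcal{B}$ arc can also be present, for otherwise the two $\theta'$ contributions cancel and the cycle would already have been negative at $\theta$, contradicting the optimality of $(E_X, E_Y, E_Z)$ there. Using Lemma~\ref{obseg} together with the fact that the cycle must close through arcs adjacent to $s$ and/or $t$ that are still present in the residual network, the three shapes in Figure~\ref{allcyc} are exhaustive. Each shape combines the $\mathcal{F}$ arc with zero-cost $s$- and $t$-arcs and a dashed $y \to x$ path whose cost telescopes to $-C_b - \overline{c}_a$ as noted in the text. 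A direct computation yields the three cycle costs $C_u + \overline{c}_u - C_i - \overline{c}_j - \theta'$, $\overline{c}_u - \overline{c}_j - \theta'$ and $C_u - C_i - \theta'$, corresponding to $e_u$ lying in $E \setminus (E_X \cup E_Y \cup E_Z)$, in $E_X$, and in $E_Y$ respectively; nonnegativity of these is exactly (\ref{tcon1})--(\ref{tcon3}).

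For the two-$\mathcal{F}$-arc case, with arcs $(x_l, y_l)$ and $(x_m, y_m)$, the cycle must traverse both $s$ and $t$ and, by Lemma~\ref{lsc}, uses at most one $\mathcal{B}$ arc. If the $y \to x$ edge appearing in its $t$-to-$s$ portion has the form $(y_v, x_u)$ with $u \neq v$, the Figure~\ref{sumcyc} decomposition splits $\mathcal{C}$ into two disjoint cycles with one $\mathcal{F}$ arc each; since $\mathcal{C}$ is negative, so is one of the pieces, which brings us back to (\ref{tcon1})--(\ref{tcon3}). The only genuinely new subcase is $u = v$, where the edge is the $\mathcal{B}$ arc $(y_r, x_r)$ with $e_r \in E_Z$; this is the cycle of Figure~\ref{cyc2}, whose cost, after the dashed $y_m \to x_l$ segment cancels $C_l$ and $\overline{c}_m$, simplifies to $C_m + \overline{c}_l - (C_r + \overline{c}_r) - \theta'$, and nonnegativity is~(\ref{tcon4}).

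The one place where care is required, and what I would consider the main (though already largely handled) obstacle, is completeness of the enumeration: showing that Lemmas~\ref{obseg} and~\ref{lsc} together with the $u \neq v$ reduction really do exhaust every shape of simple cycle in $G^r_{\theta'}$ whose cost can dip below zero as $\theta$ increases. Once this is accepted, the four inequalities are merely nonnegativity of the four cycle-cost expressions computed above, and the lemma follows.
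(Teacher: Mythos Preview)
Your proposal is correct and follows essentially the same approach as the paper: both invoke the negative cycle optimality condition and then rely on the cycle classification developed via Lemmas~\ref{obseg} and~\ref{lsc} (together with the decomposition of Figure~\ref{sumcyc}) to reduce every potentially negative cycle in $G^r_{\theta'}$ to one of the four shapes in Figures~\ref{allcyc} and~\ref{cyc2}, whose nonnegativity is exactly (\ref{tcon1})--(\ref{tcon4}). The paper's proof is terser only because the cycle enumeration and cost computations are carried out in the text preceding the lemma rather than inside the proof itself; the underlying argument is identical.
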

\begin{proof}
	Consider the optimal flow $f_\theta$ associated with $(E_X,E_Y,E_Z)$ 
	and the corresponding residual network $G^r_{\theta}$. 
	Fix $\theta'>\theta$ so that the conditions (\ref{tcon1})-(\ref{tcon4}) 
	are satisfied and suppose that $f_\theta$ is not optimal in $G_{\theta'}$. 
	Then a negative cost directed cycle must appear in $G^r_{\theta'}$ with respect to $f_\theta$. 
	This cycle must be of the form depicted in Figures~\ref{allcyc} and~\ref{cyc2}. 
	In consequence, at least one of the conditions (\ref{tcon1})-(\ref{tcon4})  must be violated.
\end{proof}

Assume that $(E_X,E_Y,E_Z)$ is an optimal solution to~(\ref{dual1}) 
for some $\theta\geq 0$. This solution must satisfy the optimality conditions (\ref{tcon1})-(\ref{tcon4})  (when $\theta'=\theta$). Suppose that one of the inequalities  (\ref{tcon1})-(\ref{tcon4}) is binding, i.e. it is satisfied as equality.
 In this case we can construct  a new solution $(E^{'}_X,E^{'}_Y,E^{'}_Z)$ whose cost is the same as $(E_X,E_Y,E_Z)$
  by using the following transformations: 
\begin{enumerate}
	\item If  (\ref{tcon1}) is binding, then $E^{'}_Z=E_Z\cup\{e_u\}$, $E^{'}_X=E_X\setminus\{e_i\}$,
	 $E^{'}_Y=E_Y\setminus \{e_j\}$,
	\item If  (\ref{tcon2}) is binding, then  $E^{'}_Z=E_Z\cup\{e_u\}$, 
	$E^{'}_X=E_X\setminus\{e_u\}$, $E^{'}_Y=E_Y\setminus \{e_j\}$,
	\item if  (\ref{tcon3}) is binding, then $E^{'}_Z=E_Z\cup\{e_u\}$, 
	$E^{'}_X=E_X\setminus\{e_i\}$, $E^{'}_Y=E_Y\setminus \{e_u\}$,
	\item if   (\ref{tcon4}) is binding, then $E^{'}_Z=E_Z\cup\{e_l,e_m\}\setminus \{e_r\}$, 
	$E^{'}_Y=E_Y\setminus\{e_m\}$, $E^{'}_X=E_X\setminus\{e_l\}$.
\end{enumerate}
It is worth pointing out that the 
above transformations correspond to augmenting a flow around the corresponding cycle in 
$G^r_\theta$ whose cost is 0. In consequence the cost of $(E^{'}_X,E^{'}_Y,E^{'}_Z)$
 is the same as the cost of $(E_X,E_Y,E_Z)$. Hence $(E^{'}_X,E^{'}_Y,E^{'}_Z)$ is 
 also optimal to~(\ref{dual1}) for $\theta$. Furthermore it must satisfy  conditions (\ref{tcon1})-(\ref{tcon4}) 
 (otherwise we could decrease the cost of $(E^{'}_X,E^{'}_Y,E^{'}_Z)$
  by augmenting a flow around some negative cycle). 
  Observe that $|E^{'}_Z|=|E_Z|+1$ and $(E^{'}_X,E^{'}_Y,E^{'}_Z)$ also has Property~\ref{parp}.
    If none of the inequalities (\ref{tcon1})-(\ref{tcon4})  is binding, then we increase $\theta$ until 
    at least one inequality among (\ref{tcon1})-(\ref{tcon4})
     becomes binding, preserving the optimality of $(E_X,E_Y,E_Z)$. 
       We start with $\theta=0$ and we repeat the procedure described  until 
     we find a optimal solution  $(E_X,E_Y,E_Z)$ to~(\ref{dual1}) for $\theta$,  which is feasible 
      in~(\ref{mip0}) and satisfies the complementary slackness condition:
      $\theta((p-k)-|E_Z|)=0$. Notice that $|E_Z|=p-k$, when $\theta>0$.
      Such a solution is an optimal one to the original problem~(\ref{mip0}).
     Indeed,
     \begin{align}
     opt_2\geq\phi(\theta)&=
     \sum_{e_i\in E_X} C_i + \sum_{e_i\in E_Y} \overline{c}_i 
      + \sum_{e_i\in E_Z} (C_i+\overline{c}_i-\theta) +(p-k)\theta \label{bpe1}\\
      &=
       \sum_{e_i\in E_X} C_i + \sum_{e_i\in E_Y} \overline{c}_i+ \sum_{e_i\in E_Z} (C_i+\overline{c}_i).  \label{bpe2}
     \end{align}
     The inequality~(\ref{bpe1}) follows from the Lagrangian bounding principle
     (see, e.g.,~\cite{AMO93}). Applying  the complementary slackness condition yields the equality~(\ref{bpe2}).
     Moreover $(E_X,E_Y,E_Z)$ is feasible in~(\ref{mip0}),  and so it is optimal solution to~(\ref{mip0}).
     We have thus arrived  to the following lemma.
   \begin{lem}
	The problem~(\ref{mpsp}) is solvable in $O((p-k+1)n^2)$ time.
	\label{tonests}
\end{lem}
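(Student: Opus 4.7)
The plan is to turn the Lagrangian analysis developed above the statement into an explicit algorithm. I start at $\theta=0$ and iteratively increase the multiplier, maintaining at every stage an optimal primal solution $(E_X,E_Y,E_Z)$ to~(\ref{dual1}) that enjoys Property~\ref{parp}, until both primal feasibility $|E_Z|\ge p-k$ in~(\ref{mip0}) and the complementary slackness condition $\theta((p-k)-|E_Z|)=0$ hold simultaneously. Correctness of the output then follows immediately from the chain~(\ref{bpe1})--(\ref{bpe2}): the Lagrangian lower bound $\phi(\theta)$ on $opt_2$ matches the primal cost of the returned triple, so that triple is optimal for~(\ref{mip0}) and hence solves~(\ref{mpsp}).

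For the initialization, Lemma~\ref{lfsol} produces in $O(n)$ time an optimal solution to~(\ref{dual1}) at $\theta=0$ that satisfies Property~\ref{parp}. If already $|E_Z|\ge p-k$, the algorithm terminates; otherwise it enters the main loop. In each iteration I compute
\[
\theta' \;=\; \min\{\text{LHS of (\ref{tcon1})},\ \text{LHS of (\ref{tcon2})},\ \text{LHS of (\ref{tcon3})},\ \text{LHS of (\ref{tcon4})}\},
\]
which by Lemma~\ref{toptcon} is the largest multiplier value for which the current triple remains optimal; I then locate a binding inequality and apply the corresponding transformation (1)--(4) listed right after that lemma. As noted there, each of these four transformations realizes the augmentation of a zero-cost cycle in the residual network $G^r_\theta$, so the Lagrangian cost and Property~\ref{parp} are preserved, and a direct case check on (1)--(4) shows that $|E_Z|$ always strictly increases by one.

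For the complexity bookkeeping, each of the four minima in the definition of $\theta'$ can be evaluated by a double loop over the relevant index subsets among $E_X$, $E_Y$, $E_Z$ and $E\setminus(E_X\cup E_Y\cup E_Z)$ in $O(n^2)$ time, after which identifying the elements that attain the minimum and executing the resulting transformation costs only $O(n)$. Since $|E_Z|$ grows by one per iteration and starts at some nonnegative value not exceeding $p$, the main loop runs at most $p-k$ times. Combining the $O(n)$ initialization with $p-k$ iterations of cost $O(n^2)$ each yields the announced $O((p-k+1)n^2)$ bound.

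The main obstacle is justifying the transformation step: one must verify that after applying (1)--(4) the new triple $(E'_X,E'_Y,E'_Z)$ is again a feasible optimal solution to~(\ref{dual1}) of the same cost and still satisfies Property~\ref{parp}. This rests on the residual-network cycle analysis preceding Lemma~\ref{toptcon}, which classifies every negative cycle that can emerge as $\theta$ increases into one of the structural forms shown in Figures~\ref{allcyc} and~\ref{cyc2}, each matched to exactly one of the four transformations. A minor subtlety worth mentioning is termination at $\theta=0$: if $|E_Z|\ge p-k$ holds after the initialization, complementary slackness is satisfied vacuously because the first factor vanishes, so returning the initial triple is already correct and the loop contributes nothing, in line with the $(p-k+1)$ factor.
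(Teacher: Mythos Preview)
Your proposal is correct and follows essentially the same approach as the paper: initialize at $\theta=0$ via Lemma~\ref{lfsol}, then repeatedly find the largest $\theta'$ keeping the current triple optimal by scanning the $O(n^2)$ inequalities~(\ref{tcon1})--(\ref{tcon4}), apply the matching transformation to increase $|E_Z|$ by one, and stop once $|E_Z|=p-k$ (or already $|E_Z|\ge p-k$ at $\theta=0$). Your write-up is more explicit about termination and the complementary-slackness justification, but the algorithm, the per-iteration cost, and the iteration count are identical to the paper's argument.
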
     
\begin{proof}
By Lemma~\ref{lfsol}
the first solution for $\theta=0$ can be computed in $O(n)$ time. 
Given an optimal solution for some $\theta\geq 0$, the next optimal solution can be found in $O(n^2)$ times, since we need to analyze $O(n^2)$ inequalities  (\ref{tcon1})-(\ref{tcon4}). 
The cardinality of $|E_Z|$
increases by 1 at each step until $|E_Z|=p-k$. Hence, the overall running time of the algorithm is $O((p-k+1)n^2)$.
\end{proof}
From Lemma~\ref{tonests}  and (\ref{irs})  we immediately get the following theorem. 
\begin{thm}
	For scenario set $\mathcal{U}^I$, the \textsc{Recoverable Selection} problem is solvable in $O((p-k+1)n^2)$ time.
\end{thm}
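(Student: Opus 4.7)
The plan is to observe that this theorem is essentially a direct corollary combining the reformulation (\ref{irs}) with Lemma~\ref{tonests}. First, I would recall that in the interval uncertainty case, the inner maximization over $\mathcal{U}^I$ collapses because the recovery set $Y$ only appears with positive coefficients; hence the worst scenario $S$ assigns each item in $Y$ the value $\overline{c}_e$, independently of the choice of $X$. This is exactly what justifies the equality in (\ref{irs}) and reduces \textsc{Recoverable Selection} with scenario set $\mathcal{U}^I$ to the pure combinatorial problem (\ref{mpsp}) of selecting two $p$-subsets $X,Y$ with $|X\cap Y|\geq p-k$ minimizing $\sum_{e\in X}C_e + \sum_{e\in Y}\overline{c}_e$.

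Once this reduction is in place, the running time claim follows immediately from Lemma~\ref{tonests}, which already establishes that (\ref{mpsp}) (equivalently, its MIP encoding (\ref{mip0})) can be solved in $O((p-k+1)n^2)$ time via the Lagrangian/parametric-flow procedure. No additional overhead is incurred by the reduction itself: reading off $X$ and $Y$ from the optimal triple $(E_X,E_Y,E_Z)$ via $X=E_X\cup E_Z$ and $Y=E_Y\cup E_Z$ is linear in $n$, and the value $\sum_{e\in X}C_e + \sum_{e\in Y}\overline{c}_e$ coincides with $opt_2$ by (\ref{irs}). There is no genuine obstacle here; the whole technical work has been carried out in Sections preceding Lemma~\ref{tonests}, so the proof is a one-line invocation chain \mbox{(\ref{irs})$\,\Rightarrow\,$(\ref{mpsp})$\,\Rightarrow\,$Lemma~\ref{tonests}}, and it suffices to write it out explicitly.
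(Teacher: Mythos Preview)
Your proposal is correct and matches the paper's approach exactly: the paper states this theorem without a separate proof, simply noting that it follows immediately from Lemma~\ref{tonests} together with the reformulation~(\ref{irs}). Your invocation chain (\ref{irs})\,$\Rightarrow$\,(\ref{mpsp})\,$\Rightarrow$\,Lemma~\ref{tonests} is precisely what the paper does.
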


\section{Robust two stage selection}
\label{sec2st}

In this section we explore the complexity of \textsc{Two-Stage Selection}. 
We begin with  a result on the problem under scenario set $\mathcal{U}^I$.
\begin{thm}
	Under scenario set $\mathcal{U}^I$, \textsc{Two-Stage Selection} is solvable in $O(n)$ time.
\end{thm}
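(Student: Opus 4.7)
The plan is to reduce the interval two-stage problem to a single deterministic \textsc{Selection} instance on modified costs, which is then solvable in linear time.

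First I would simplify the inner adversary problem. For any fixed $X \in \Phi_1$, the worst-case second-stage cost is $\max_{S \in \mathcal{U}^I}\min_{Y \in \Phi_X}\sum_{e \in Y} c_e^S$. Because each coordinate $c_e^S$ varies independently in $[\underline{c}_e, \overline{c}_e]$, raising any $c_e^S$ to $\overline{c}_e$ weakly increases every term $\sum_{e \in Y} c_e^S$, and hence weakly increases the minimum over $Y$ as well. So the scenario $S^\star = (\overline{c}_e)_{e \in E}$ is a worst case, and the problem collapses to
\[
opt_1 \;=\; \min_{X \in \Phi_1}\Bigl[\sum_{e \in X} C_e \;+\; \min_{Y \in \Phi_X}\sum_{e \in Y} \overline{c}_e\Bigr].
\]

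Next, I would recast this as choosing $p$ distinct items of $E$ and declaring each one either first-stage (cost $C_e$) or second-stage (cost $\overline{c}_e$). Setting $d_e := \min\{C_e, \overline{c}_e\}$, any feasible disjoint pair $(X,Y)$ with $|X|+|Y|=p$ has cost at least $\sum_{e \in X \cup Y} d_e \ge \sum_{e \in Z^\star} d_e$, where $Z^\star \subseteq E$ is any fixed set of $p$ items with smallest $d$-values. To match this lower bound I would output $X := \{e \in Z^\star : C_e \le \overline{c}_e\}$. A short sanity check confirms that the adversary-optimal second-stage response is exactly $Z^\star \setminus X$: for $e \in Z^\star \setminus X$ one has $d_e = \overline{c}_e$, and for any $e' \in E \setminus Z^\star$ one has $\overline{c}_{e'} \ge d_{e'} \ge d_e = \overline{c}_e$, so no item outside $Z^\star$ can beneficially replace one inside.

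Finally, the algorithm is: compute $d_e$ for all $e \in E$, extract $Z^\star$ using the deterministic linear-time selection procedure cited in Section~\ref{sintro}, and emit $X$ as above. Each step runs in $O(n)$. The only delicate point is the consistency check used in the previous paragraph, which guarantees that the second-stage recourse really does pick $Z^\star \setminus X$ (so that the lower bound $\sum_{e \in Z^\star} d_e$ is attained); this follows directly from the defining property of $Z^\star$ and requires no further work.
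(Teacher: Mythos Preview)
Your proof is correct and follows essentially the same approach as the paper: define $d_e=\min\{C_e,\overline{c}_e\}$, take the $p$ items $Z^\star$ of smallest $d_e$, and split $Z^\star$ into first- and second-stage items according to which of $C_e,\overline{c}_e$ is smaller. Your write-up is actually a bit more careful than the paper's, since you explicitly justify both that the worst scenario is $(\overline{c}_e)_{e\in E}$ and that the optimal second-stage response to your $X$ is indeed $Z^\star\setminus X$.
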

\begin{proof}
	Define  $\hat{c}_e=\min\{C_e, \overline{c}_e\}$, $e\in E$. Let $Z$ be the set of $p$ items of the smallest values of $\hat{c}_e$.  Clearly, $\hat{c}=\sum_{e \in Z} \hat{c}_e$ is a lower bound on $opt_1$. A solution $X$ with the cost $\hat{c}$ can be now constructed as follows. For each $e \in Z$, if $C_e\leq \overline{c}_e$, then we add $e $ to $X$; otherwise we add $e $ to $Y$. It holds $f_1(X,S)=\sum_{e\in X} C_e + \sum_{e\in Y} \overline{c}_e=\hat{c}$ and $X$ must be optimal. It is easy to see that $X$ can be computed in $O(n)$ time.
\end{proof}
The problem under consideration is much harder for scenario set $\mathcal{U}^D$. The following theorems hold.

\begin{thm}
	Under scenario set $\mathcal{U}^D$, \textsc{Two-Stage Selection} is  NP-hard  when the number of scenarios equals two.
\label{thm2st1}
\end{thm}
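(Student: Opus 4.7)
The plan is to reduce from the Balanced Partition problem to \textsc{Two-Stage Selection} with three scenarios. Given $a_1, \ldots, a_{2n}$ with $\sum_i a_i = 2b$ and the question of whether some $I \subseteq [2n]$ with $|I| = n$ satisfies $\sum_{i \in I} a_i = b$, I would introduce ``partition items'' $e_1, \ldots, e_{2n}$ together with a small number of auxiliary items used to pin down the cardinality of the first-stage selection $X$. The first-stage costs and the three second-stage scenarios would be designed so that scenarios $S_1$ and $S_2$ are ``anti-symmetric'' on the partition items (for example, with second-stage costs of the form $b + a_i$ in $S_1$ and $b + 2b/n - a_i$ in $S_2$, echoing the trick used in Theorem~\ref{thmrecd1}), while the third scenario $S_3$ serves to constrain $|X|$ to a specific value. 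The third scenario is genuinely useful here because, unlike in the recoverable model where $|X| = p$ is forced, the two-stage model allows $|X| \leq p$ as a free variable in $\Phi_1$.

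The two directions of the equivalence would proceed as follows. Completeness: given a valid partition $I$, take $X = \{e_i : i \in I\}$ (plus the correct auxiliary items) as the first-stage selection; using $\sum_{i \in I} a_i = \sum_{i \notin I} a_i = b$, verify that the worst-case cost $\max_S f_1(X,S)$ meets the target threshold $T$ through the anti-symmetric design of $S_1, S_2$. Soundness: assuming $opt_1 \leq T$, first use the extreme costs introduced by $S_3$ to deduce that any optimal $X$ must have the required cardinality, and then exploit the anti-symmetry of $S_1$ and $S_2$, together with the structure of the optimal second-stage completions, to conclude that $\sum_{e_i \in X} a_i = b$, i.e.\ that the Balanced Partition instance is feasible.

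The main obstacle is controlling the second-stage minimization. Unlike in the recoverable model, where the number of replacements is bounded by $k$, the two-stage model gives the second stage complete freedom in picking the cheapest $p - |X|$ items from $E \setminus X$, so the optimal second-stage cost as a function of $X$ is rather involved. The reduction must therefore be engineered so that, in each scenario, the optimal second-stage completion is essentially forced: this is typically achieved by placing prohibitively large second-stage costs on ``irrelevant'' items in each scenario, so that exactly one natural completion is cheap and its cost depends in a controlled linear way on $\sum_{e_i \in X} a_i$. Arranging three scenarios so that they interact correctly both with the first-stage cost $\sum_{e \in X} C_e$ and with the scenario-dependent completion cost, without creating ``shortcuts'' that allow an infeasible first-stage $X$ to nonetheless achieve cost $\leq T$, is the technical crux of the argument.
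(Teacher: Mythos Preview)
Your plan matches the paper's proof: a reduction from \textsc{Balanced Partition} with two anti-symmetric scenarios $S_1,S_2$ on the partition items and a third scenario $S_3$ (plus $n/2$ auxiliary items of zero cost under $S_3$) to control $|X|$. One refinement to note as you fill in the details: in the paper $S_3$ only enforces the \emph{lower} bound $|X|\ge n/2$, while the upper bound $|X|\le n/2$ comes from making the first-stage costs $C_{e_i}=3b+a_i$ large enough that choosing more than $n/2$ partition items already overshoots the target; and because first-stage items pay no second-stage cost in this model, the Theorem~\ref{thmrecd1} values do not transplant directly---the paper instead uses $c_{e_i}^{S_1}=2a_i$ and $c_{e_i}^{S_2}=6b/n-a_i$, tuned so that $f_1(X,S_1)=f_1(X,S_2)$ precisely when $\sum_{e_i\in X}a_i=b$.
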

\begin{proof}
In order to prove this theorem we will adapt the idea from~\cite{BBI14}. Consider the following \textsc{Subset Sum} problem, which is known to be NP-hard~\cite{GJ79}. We are given a collection of nonnegative integers $\mathcal{A}=\{a_1,\dots, a_n\}$ and a nonnegative
 integer~$b$, $b\leq \sum_{i\in [n]} a_i$. We ask if there is a subset $I\subseteq [n]$ such that $\sum_{i\in I} a_i=b$. Given an instance $(\mathcal{A},b)$ of \textsc{Subset Sum} we build the corresponding instance of \textsc{Two-Stage Selection} in the following way. We create a set of $n+1$ items $E=\{e_1,\dots,e_{n+1}\}$. Let $M=2\sum_{i\in [n]} a_i$. The first stage costs of all the items in $E$ are set to $M$. For each $i\in [n]$, the second stage cost of $e_i$ under $S_1$ is $a_i+M$ and under $S_2$ it is equal to $M-a_i$. The second stage cost of the item $e_{n+1}$ is equal to $M-2b$ under $S_1$ and $M$ under $S_2$. We complete the reduction by fixing $p=n+1$. Observe that all the items must be selected and the problem consists in determining a subset $X\subseteq E$ of the items chosen in the first stage (then all the items in $E\setminus X$ must be selected in the second stage under each scenario). The reduction is presented in Table~\ref{tab2st}.
\begin{table}[h]
	\caption{The instance of \textsc{Two-Stage Selection} with $p=n+1$, corresponding to \textsc{Subset Sum}.} \label{tab2st}
	\centering
	\begin{tabular}{c|c|cccc}
	$E$	 & $C_e$  & $S_1$ & $S_2$ \\ \hline
	$e_1$ & $M$ & $M+a_1$ & $M-a_1$  \\
	$e_2$ & $M$ & $M+a_2$ & $M-a_2$  \\
	$\vdots$ & $\vdots$ & $\vdots$&$\vdots$ \\
	$e_n$ & $M$ & $M+a_n$ & $M-a_n$  \\ \hline
	$e_{n+1}$ & $M$ & $M-2b$ & $M$ 
	\end{tabular}
	\end{table}

We now show that the answer to \textsc{Subset Sum} is yes if and only if there is a solution $X$ whose cost $F(X)$ is not greater than $(n+1)M-b$. 
By the construction, such a solution~$X$ does not contain the item~$e_{n+1}$ ($e_{n+1}$ is not chosen in the first stage). Indeed, if $e_{n+1}\in X$, then the cost of $X$ under $S_1$ is not less than $(n+1)M$.
Thus we can assume that $e_{n+1}\notin X$. Define by $I$ the set of indices of the items from $X$ (note that $I\subseteq [n]$) and let $I'=[n]\setminus I$. Since $p=n+1$, the cost of $X$ can be expressed as follows:
$$F(X)=\sum_{e\in X} C_e+\max\left\{\sum_{e\in E\setminus X} c^{S_1}_e, \sum_{e\in E\setminus X} c^{S_2}_e\right\}.$$
By the definition of the first and second stage costs we get
\begin{align*}
F(X)&=M|I|+\max\left\{\sum_{i\in I'} (M+a_i)+ M-2b, \sum_{i\in I'}  (M-a_i)+M\right\} \\
&=M(n+1)+\max\left\{\sum_{i\in I'} a_i -2b, \sum_{i\in I'} -a_i\right\}\\
&=M(n+1)-b+\max\left\{\sum_{i\in I'} a_i-b, b-\sum_{i\in I'} a_i\right\}\\
&=M(n+1)-b+|\sum_{i\in I'} a_i -b|.
\end{align*}
Now it is easily seen that $F(X)\leq M(n+1)-b$ if and only if $\sum_{i\in I'\subseteq [n]} a_i=b$, i.e. when the answer to \textsc{Subset Sum} is yes.
\end{proof}

\begin{thm}
	\label{thmhard2}
	When $K$ is a part of input, then \textsc{Two-Stage Selection}  under scenario set $\mathcal{U}^D$ is  strongly NP-hard. Furthermore, 
there exists a constant $c>0$, such that  the problem
is NP-hard to  approximate within a factor of $c\ln n$.	
\end{thm}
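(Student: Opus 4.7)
The plan is to prove both assertions by a single polynomial-time reduction from \textsc{Set Cover}, exploiting its strong NP-hardness and its $(1-\epsilon)\ln N$-inapproximability (Feige; Moshkovitz--Raz; Dinur--Steurer). The idea is to encode the covering requirement in the adversarial second stage: for each element of the universe I introduce a scenario that forces the completion to be expensive unless the first-stage selection already contains some set covering that element.

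Given $(U,\mathcal{S})$ with $|U|=N$ and $|\mathcal{S}|=M$, I first pad $\mathcal{S}$ so that every element appears in the same number $f=\max_i f_i$ of sets, by adjoining for each $u_i$ the $f-f_i$ singleton sets $\{u_i\}$; this leaves the optimal cover size $k^\ast$ unchanged. Let $M'$ denote the number of sets after padding. The \textsc{Two-Stage Selection} instance then has items $E=\{s_1,\dots,s_{M'}\}\cup\{g_1,\dots,g_N\}$, where $s_j$ corresponds to the $j$-th padded set and the $g_l$ are free fillers; first-stage costs are $C_{s_j}=1$ and $C_{g_l}=0$; for each $u_i$ there is a scenario $T_i$ with $c_{s_j}^{T_i}=L$ if $u_i\in S_j$, $c_{s_j}^{T_i}=0$ otherwise, and $c_{g_l}^{T_i}=0$, where $L$ is a constant exceeding $M'$; and $p=|E|-f+1$.

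A direct count shows that any stage-2 completion leaves out exactly $f-1$ items of $E\setminus X$, while the expensive items under $T_i$ form the set $C_i=\{s_j:u_i\in S_j\}$ of cardinality $f$. Hence if $X$ contains some element of $C_i$, then $|C_i\setminus X|\le f-1$ and all expensive items outside $X$ can be discarded, yielding stage-2 cost $0$; otherwise $|C_i\setminus X|=f$ forces at least one expensive item into $Y$, for a stage-2 cost of at least $L$. Since $L>M'\ge|X\cap\{s_1,\dots,s_{M'}\}|$, every optimal $X$ must be a cover, and therefore $opt_1=k^\ast$. Strong NP-hardness is thus inherited directly from \textsc{Set Cover}, and any $\alpha$-approximation of the two-stage problem returns an $X$ of cost at most $\alpha k^\ast$ whose set items form an $\alpha$-approximate cover.

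The main obstacle will be tightening the logarithmic factor so that $(1-\epsilon)\ln n$-inapproximability is obtained for every $\epsilon>0$. With $n=|E|\le M+Nf+N=\operatorname{poly}(N)$ we have $\ln n=\Theta(\ln N)$, but to absorb the ratio $\ln n/\ln N$ into an arbitrarily small $\epsilon$ one must start from Set Cover instances in which $M=N^{1+o(1)}$ and $f=N^{o(1)}$, as available in the Moshkovitz--Dinur--Steurer constructions; then $n=N^{1+o(1)}$, and $(1-\epsilon')\ln N$-hardness of \textsc{Set Cover} transfers to $(1-\epsilon)\ln n$-hardness of \textsc{Two-Stage Selection} for arbitrarily small $\epsilon>0$ once $N$ is taken large enough.
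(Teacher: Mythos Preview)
Your proposal is correct and follows the same high-level strategy as the paper: a cost-preserving reduction from \textsc{Set Cover}, invoking Dinur--Steurer for the $(1-\epsilon)\ln n$ lower bound. The gadget, however, is genuinely different. The paper does \emph{not} pad to uniform frequency; instead it introduces dummy items $u_1,\dots,u_m$ with large first-stage cost $M$ and, crucially, \emph{scenario-dependent} second-stage costs: under scenario $S_i$ (for universe element $i$ with frequency $r_i$) the first $r_i$ dummies are free and the remaining $m-r_i$ cost $M$, so that exactly $m$ items have zero cost under every scenario. With $p=m+1$, the second stage can be completed for free iff the first-stage set items already hit element $i$. Your construction achieves the same ``exactly one expensive item is forced'' effect by first equalising all frequencies to $f$ via singleton padding (which you correctly argue preserves $k^\ast$) and then using globally free fillers, so that the number of discarded items in stage~2 is the scenario-independent constant $f-1$. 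Your version trades the paper's scenario-tailored dummy costs for a preprocessing step on the Set Cover side; the counting is arguably cleaner, at the price of a somewhat larger blow-up ($M'\le M+Nf$ rather than $2m$).

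Two small points. First, for the approximation transfer you need the $\alpha$-approximate $X$ to be a cover, which requires $\alpha k^\ast<L$; since you are ruling out $\alpha=(1-\epsilon)\ln n$ and $k^\ast\le N$, take $L$ polynomially large (e.g.\ $L>n^2$) rather than merely $L>M'$. Second, you are more careful than the paper about matching $\ln n$ to $\ln N$: the paper simply calls the reduction ``cost preserving'' and cites~[DS14] with $n$ the ``size of the instance'', whereas you explicitly argue that one needs $n=N^{1+o(1)}$ and appeal to the parameters of the Moshkovitz/Dinur--Steurer instances. That extra care is appropriate, but the specific claims $M=N^{1+o(1)}$ and $f=N^{o(1)}$ should be backed by a precise reference, as the exponent in the standard reductions depends on $\epsilon$.
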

\begin{proof}
Consider the following \textsc{Min-Set Cover} problem. We are given a finite set 
$U=\{u_1,\dots,u_n\}$, called the \emph{universe},  and a collection $\mathcal{A}=\{A_1,\dots,A_m\}$ of subsets of $U$. A subset $\mathcal{D}\subseteq \mathcal{A}$ covers~$U$ (it is called a \emph{cover}) if for each element $u_i\in U$, there exists $A_j\in \mathcal{D}$ such that $u_i\in A_j$. We seek a cover $\mathcal{D}$ of the smallest size $|\mathcal{D}|$. 
 \textsc{Min-Set Cover} is known to be strongly NP-hard (see, e.g,~\cite{GJ79}) and
 there exists a constant $c>0$ such that  it
is NP-hard to  approximate within a factor of $c\ln n$~\cite{RS97}. 
 We now show a cost preserving reduction from \textsc{Min-Set Cover} to \textsc{Two-Stage Selection}.
Given an instance $(U,\mathcal{A})$ of \textsc{Min-Set Cover} we construct the corresponding instance of \textsc{Two-Stage Selection} as follows. For each set $A\in \mathcal{A}$ we create an item $e_{A}$ with first-stage cost equal to~1. We also create additional $m$ items labeled as $g_1,\dots,g_m$ with the first-stage costs equal to a sufficiently large number $M$, say $M=|U||\mathcal{A}|$. 
Thus $E=\bigcup_{A\in\mathcal{A}}\{e_A\}\cup \bigcup_{i\in [m]}\{g_i\}$, $|E|=2m$.
The scenario set $\mathcal{U}^D$ is constructed in the following way. For each element $u_i\in U$ we form scenario $S_{u_i}$ under which the cost of $e_A=M$ if $u_i \in A$ and 0 otherwise,
$|\mathcal{U}^D|=|U|$.
 Let $r$ be the number of $M$'s created under scenario $S_{u_i}$ for the items~$e_A$, $A\in \mathcal{A}$. Hence $r$ is the number of sets in $\mathcal{A}$ which contain~$u_i$. We set the costs of $g_1,\dots, g_r$ equal to~0 and the costs of $g_{r+1},\dots,g_m$ equal to $M$. Notice that under each scenario $S_{u_i}$ exactly $m$ items have costs equal to~0. We fix $p=m+1$.
An example of the reduction is depicted in Table~\ref{tab1nn}.
	\begin{table}[h]
	  \centering
	  \caption{The instance of \textsc{Two-Stage Selection} with $p=7$ for $U=\{u_1,\dots,u_7\}, \mathcal{A}=\{\{u_2,u_4,u_3\}, \{u_1\}, \{u_3,u_7\}, \{u_1,u_4,u_6,u_7\}, \{u_2,u_5,u_6\}, \{u_1,u_6\}\}$} \label{tab1nn}
			\begin{tabular}{l|c|cccccccc}
								$E$		 & $C_e$ & $S_{u_1}$ & $S_{u_2}$ & $S_{u_3}$ &  $S_{u_4}$ & $S_{u_5}$ & $S_{u_6}$ & $S_{u_7}$ & \\ \hline
					$e_{\{u_2,u_4,u_3\}}$ 			& 1  & 0 & $M$ & $M$ & $M$ & 0 & 0& 0 \\
					$e_{\{u_1\}}$  				& 1 & $M$ & 0& 0 & 0 & 0& 0 & 0 \\  
					$e_{\{u_3, u_7\}}$ 				& 1 & 0 & 0 & $M$ & 0 & 0& 0 & $M$ \\
					$e_{\{u_1,u_4,u_6,u_7\}}$ 			& 1 & $M$ & 0 & 0 & $M$ & 0& $M$& $M$\\  
					$e_{\{u_2,u_5,u_6\}}$				& 1 & 0 & $M$ & 0 & 0& $M$& $M$ & 0\\
					$e_{\{u_1,u_6\}}$ 				& 1 & $M$ & 0 & 0 & 0 & 0& $M$ & 0  \\  \hline
					$g_1$ 				& $M$ & $0$ & $0$   & $0$ & $0$& $0$& $0$& 0\\
					$g_2$ 				& $M$ & $0$ & $0$   & 0 & $0$ & $M$& $0$& 0 \\ 
					$g_3$				& $M$ & $0$ & $M$  & $M$ & $M$ & $M$ & $0$ & $M$  \\
					$g_4$				& $M$ & $M$ & $M$ & $M$ & $M$ & $M$ & $M$ & $M$\\
					$g_5$ 				& $M$ & $M$ & $M$  & $M$ & $M$ & $M$ & $M$ & $M$\\
					$g_6$				& $M$ & $M$ & $M$  & $M$ & $M$ & $M$ & $M$ & $M$
			\end{tabular}
		\end{table}

	We now show that there is a cover $\mathcal{D}$ of size~$s$ if and only if there
	 is a solution $X$ such that $f_1(X,S)=s$ for every $S\in \mathcal{U}^D$. 
	 Let $\mathcal{D}$ be a cover of size~$s$. In the first stage we choose the items $e_A$ for each $A\in \mathcal{D}$; let $X$ be this set, $|X|=|\mathcal{D}|$.
	  Consider any scenario $S_{u_i}$. Let $E'=E\setminus X$. Since the element $u_i$ is covered, there must exist at least $m-|X|+1$ elements in $E'$ with 0 costs under $S_{u_i}$. 
	  We use these elements in the second stage
	  and form set $Y^{S_{u_i}}$ so that $|X\cup Y^{S_{u_i}}|=p$ , which gives $f_1(X,S_{u_i})=|\mathcal{D}|$ and, consequently, $f_1(X,S_{u_i})=s$
	  for every $S_{u_i}\in \mathcal{U}^D$. Assume now that
	  there
	 is a solution $X$ such that $f_1(X,S)=s$ for every $S\in \mathcal{U}^D$.
	  By the construction and the fact that $s<M$, 
	  $|X|=s$ and $X$ contains only the items corresponding to~$\mathcal{A}$.
	  Let $\mathcal{D}=\{A: e_A\in X\}$, $|\mathcal{D}|=|X|$. Consider scenario  $S_{u_i}$. 
	  Since $f_1(X,S_{u_i})=|X|=s$,
	  we must be able to form set $Y^{S_{u_i}}$, such that $|X\cup Y^{S_{u_i}}|=p$,
	  with $m-|X|+1$ items of 0 cost under $S_{u_i}$. This is possible only when the cost of some $e_A\in X$ under $S_{u_i}$ is $M$, i.e. when the element $u_i$ is covered by $\mathcal{D}$. In consequence, each element $u_i$ is covered by $\mathcal{D}$ and
	  $\mathcal{D}$  is of size~$s$.
It is clear that the presented reduction is cost preserving and the theorem follows.
\end{proof}

We now present a positive result for \textsc{Two-Stage Selection} under scenario set~$\mathcal{U}^D$. Namely,
we  construct an LP-based randomized 
approximation algorithm for this problem, which returns an $O(\log K+ \log n)$-approximate solution with high probability. Consider the following linear program:
\begin{align}
\mathcal{LP}(L):&\sum_{e\in E} C_e x_e+c^{S}_e y^{S}_e\leq L,& &S\in \mathcal{U}^D,\label{sc}\\
                          &\sum_{e \in E}x_e+y^S_e=p,&&S\in \mathcal{U}^D, \label{bc2}\\
                          &x_e+y^{S}_e\leq 1, &e\in E,\; &S\in \mathcal{U}^D, \label{cc}\\
		        &x_e=0,  &e\notin E^1(L), \;&  \label{hc3}\\
		        &y^{S}_e=0,  &e\notin E^{S}(L),\; &S\in \mathcal{U}^D, \label{hc4}\\
		         &x_e\geq 0,  &e\in E^1(L),\;& \label{hc1} \\
                         &y^{S}_e\geq 0,  &e\in E^{S}(L),\; &S\in \mathcal{U}^D,\label{hc2} 
\end{align}
where $E^1(L)=\{e\in E\,:\, C_e\leq L\}$ and $E^S(L)=\{e\in E\,:\, c^S_e\leq L\}$.
Minimizing $L$ subject to~(\ref{sc})-(\ref{hc2}), we obtain an LP relaxation of \textsc{Two-Stage Selection}.
Let  $L^*$ denote  the smallest value of the parameter~$L$
for which $\mathcal{LP}(L)$ is feasible.
Obviously, $L^*$ is a lower bound on $opt_1$, and can be determined in polynomial time by using binary search.  
We lose nothing by assuming, from now on, that 
 $L^*=1$, and all the item costs  are such that~$C_e, c^{S}_e\in[0,1]$, $e\in E$, $S\in \mathcal{U}^D$.
 One can easily meet this
assumption by dividing  all the item costs by $L^*$.  
Notice that we can assume that $L^*>0$. Otherwise, when $L^*=0$,
there exists an optimal integral solution of the zero total cost. Such a solution can be constructed by picking all the items with
zero first and second stage costs under all scenarios.
 
 \begin{algorithm}
  \caption{A randomized algorithm for \textsc{Two-Stage Selection}.}\label{alg1}
 \begin{algorithmic}[1]
 \STATE  $c_{\max}:= \max_{e\in E}\{C_e, 
\max_{S\in \mathcal{U}} c_e^S\}$
\STATE  Use 
binary search in $[0,(n-1)c_{\max}]$ to find 
 the minimal value of $L^*$ for which there exists
 a feasible solution $(\hat{\pmb{x}},(\hat{\pmb{y}}^S)_{S\in \mathcal{U}^D})$ to $\mathcal{LP}(L^*)$\\
 \COMMENT{\textbf{Randomized rounding}} 
\STATE $\hat{t}:=\lceil 32\ln n +8\ln (2K) \rceil$
\STATE $X:=\emptyset$, $Y^S:=\emptyset$ for each $S\in\mathcal{U}^D$
\STATE For each $e\in E$, flip an $\hat{x}_e$-coin $\hat{t}$ times, if it comes up heads at least once
             add $e$ to $X$\label{k5}
\STATE For each $e\in E$ and each $S\in \mathcal{U}^D$, flip an $\hat{y}^S_e$-coin  $\hat{t}$ times, if it comes up heads at least once add item $e$ to $Y^S$\label{k6}\\
 \COMMENT{\textbf{End of randomized rounding}} 
\STATE Add to $X$  at most~4 arbitrary items, which have been not selected in steps~\ref{k5} and~\ref{k6}\label{k7}
\STATE \textbf{If} $|X\cup Y^S|\geq p$ for each $S\in \mathcal{U}^D$ \textbf{then} \textbf{return} $X, Y^S, S\in \mathcal{U}^D$ \textbf{else fail}
\end{algorithmic}
\end{algorithm}

Now our aim is to  convert  a feasible  solution~$(\hat{\pmb{x}},\hat{\pmb{y}})=(\hat{\pmb{x}},(\hat{\pmb{y}}^S)_{S\in \mathcal{U}^D})\in [0,1]^{n+nK}$
 to  $\mathcal{LP}(L^*)$  into a  feasible solution 
  of \textsc{Two-Stage Selection}. Let $x$-\emph{coin} be a coin which comes up head with probability $x\in [0,1]$. We use such a device to construct a randomized  algorithm (see Algorithm~\ref{alg1}) for the problem.
If Algorithm~\ref{alg1} outputs a solution such that $|X\cup Y^S|\geq p$ for each $S\in \mathcal{U}^D$, then the sets $X$ and $Y^S$ can be converted  into a feasible solution  in the following way.
 For each scenario $S\in \mathcal{U}^D$,  If $e\in X\cap Y^S$, then we remove $e$ from $Y^S$. Next, if $|X \cup Y^S|> p$, then we remove arbitrary items, first from $Y^S$ and then from $X$ so that $|X\cup Y^S|=p$. Notice that this operation does not increase the total cost of the selected items under any scenario. The algorithm fails if $|X\cup Y^S|<p$ for at least one scenario $S$. We will show, however, that this bad event occurs with a small probability.  Let us first analyze the cost of the obtained solution.
 
 \begin{lem}
Fix scenario $S\in \mathcal{U}^D$. 
  The probability that the total cost of the items selected in 
 the first and the second stage under
$S$,
  after   the randomized rounding,
 is at  least $\hat{t}L^*+(\mathrm{e}-1)\sqrt{\hat{t}L^{*}\ln(2Kn^2)}$
  is at most $\frac{1}{2Kn^2}$, where $\hat{t}=\lceil 32\ln n +8\ln (2K) \rceil$.
  \label{lbou1}
\end{lem}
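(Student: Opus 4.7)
The strategy is to dominate the realized second-stage cost by a sum of independent weighted Bernoullis and then invoke a Chernoff-type tail bound. For each $e\in E$ and $i\in\{1,\dots,\hat{t}\}$ I will let $X_{e,i}$ be the indicator that the $i$-th flip of the $\hat{x}_e$-coin shows heads, and $Y^S_{e,i}$ the analogous indicator for $\hat{y}^S_e$. These $2n\hat{t}$ random variables are mutually independent Bernoullis with means $\hat{x}_e$ and $\hat{y}^S_e$. By the rule of Algorithm~\ref{alg1}, the inclusion events satisfy $\mathbb{1}[e\in X]\le\sum_i X_{e,i}$ and $\mathbb{1}[e\in Y^S]\le\sum_i Y^S_{e,i}$, so the cost under scenario $S$ of the solution returned by the rounding is dominated by
\[T\;:=\;\sum_{e\in E}\sum_{i=1}^{\hat{t}}\bigl(C_e X_{e,i}+c^S_e Y^S_{e,i}\bigr).\]
Linearity of expectation together with the LP constraint~(\ref{sc}) and the normalization $L^{*}=1$ yields $E[T]=\hat{t}\sum_e(C_e\hat{x}_e+c^S_e\hat{y}^S_e)\le\hat{t}L^{*}=\hat{t}$, and because $opt_1\ge L^{*}=1$ it suffices to establish $\Pr[T\ge E[T]+\lambda]\le 1/(2Kn^2)$ for $\lambda:=(\mathrm{e}-1)\sqrt{\hat{t}\ln(2Kn^2)}$.

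Now $T$ is a sum of independent $[0,1]$-valued random variables (using $C_e,c^S_e\in[0,1]$ after normalization), so I will apply Bernstein's inequality
\[\Pr[T\ge E[T]+\lambda]\;\le\;\exp\!\Bigl(-\frac{\lambda^2}{2\sigma^2+2\lambda/3}\Bigr),\]
observing that the total variance satisfies $\sigma^2\le E[T]\le\hat{t}$ since each $\mathrm{Var}(C_eX_{e,i})=C_e^2\hat{x}_e(1-\hat{x}_e)\le C_e\hat{x}_e$ (and analogously for the $Y^S_{e,i}$ terms). The prescription $\hat{t}=\lceil 32\ln n+8\ln(2K)\rceil\ge 8\ln(2Kn^2)$ gives $\lambda/\hat{t}\le(\mathrm{e}-1)/\sqrt{8}$, so the exponent in the exponential is at least
\[\frac{(\mathrm{e}-1)^2\ln(2Kn^2)}{2+(\mathrm{e}-1)/(3\sqrt{8})}\;\ge\;\ln(2Kn^2),\]
an elementary numerical check since $(\mathrm{e}-1)^2\approx 2.95$ while the denominator is $\approx 2.40$. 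This delivers $\Pr[T\ge E[T]+\lambda]\le 1/(2Kn^2)$, and the lemma follows.

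The main obstacle is not the tail inequality itself but the preliminary decoupling: membership of $e$ in $X$ is a logical OR over $\hat{t}$ correlated trials, and the crude-but-tight replacement $\mathbb{1}[\text{at least one head}]\le\text{number of heads}$ is what converts this OR into a sum of independent Bernoullis to which Chernoff/Bernstein applies. The value $\hat{t}=\Theta(\log n+\log K)$ in Algorithm~\ref{alg1} has been calibrated precisely so that, after this trade, the Chernoff exponent still carries a factor of $\ln(2Kn^2)$; carefully tracking these constants to confirm the exponent exceeds $\ln(2Kn^2)$ is the most delicate step.
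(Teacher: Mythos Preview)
Your argument is correct and follows the same overall strategy as the paper: bound the expected post-rounding cost by $\hat t\,L^*=\hat t$ using the LP constraint~(\ref{sc}), then apply a Chernoff-type tail bound to a sum of independent $[0,1]$-weighted Bernoullis. The one substantive difference is that your ``decoupling'' step---replacing $\mathbb{1}[e\in X]$ by $\sum_i X_{e,i}$---is unnecessary. The item-level indicators $\mathrm{X}_e=\mathbb{1}[e\in X]$ and $\mathrm{Y}^S_e=\mathbb{1}[e\in Y^S]$ are already mutually independent across all $e$ (and across the two families), because distinct items use distinct coins; the paper therefore works directly with $\mathrm{C}^S=\sum_e C_e\mathrm{X}_e+c^S_e\mathrm{Y}^S_e$, bounds $\mathbf{E}[\mathrm{C}^S]\le\hat t$ via $\Pr[\mathrm{X}_e=1]=1-(1-\hat x_e)^{\hat t}\le\hat t\,\hat x_e$, and invokes Raghavan's Chernoff--Hoeffding bound in one line. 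Your route via Bernstein is a valid alternative and gives the same conclusion, but what you flag as ``the main obstacle'' is not an obstacle at all: the OR-of-$\hat t$-flips structure never creates dependence between items. (Minor slip: in your displayed denominator the term should be $2(\mathrm{e}-1)/(3\sqrt{8})$, matching your numerical value $\approx 2.40$; the conclusion is unaffected.)
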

\begin{proof}
Fix scenario $S\in \mathcal{U}^D$.
Let $\mathrm{X}_e$ be a random variable such that $\mathrm{X}_e=1$ if item~$e$ is included 
in $X$;
 and $\mathrm{X}_e=0$ otherwise, and let 
$\mathrm{Y}^S_e$ be a random variable such that $\mathrm{Y}^S_e=1$ if item~$e$ is included 
in $Y^S$;
 and $\mathrm{Y}^S_e=0$ otherwise.
Obviously, $\mathrm{Pr}[\mathrm{X}_e=1]=1-(1-\hat{x}_e)^{\hat{t}}$ and
$\mathrm{Pr}[\mathrm{Y}^S_e=1]=1-(1-\hat{y}^S_e)^{\hat{t}}$.
 Because $\hat{t}\geq 1$ and 
$\hat{x}_e,\hat{y}^S_e\in [0,1]$,
an easy computation shows that
 $\mathrm{Pr}[\mathrm{X}_e=1]\leq \hat{x}_e\hat{t}$ and  $\mathrm{Pr}[\mathrm{Y}^S_e=1]\leq \hat{y}^S_e\hat{t}$.
 Define $\mathrm{C}^S=\sum_{e\in E} C_e \mathrm{X}_e+c^{S}_e \mathrm{Y}^S_e$ and let $\psi^S$ be the event that $\mathrm{C}^S>\hat{t}L^*+(\mathrm{e}-1)\sqrt{\hat{t}L^*\ln(2Kn^2)}$, we recall that $L^*=1$.
 The following inequality holds:
 \begin{align}
 \mathbf{E}[\mathrm{C}^S]&=\sum_{e\in E} C_e\mathrm{Pr}[\mathrm{X}_e=1] +
             c^{S}_e \mathrm{Pr}[\mathrm{Y}^S_e=1]=\sum_{e\in E} C_e(1-(1-\hat{x}_e)^{\hat{t}}) +
             c^{S}_e(1-(1-\hat{y}^S_e)^{\hat{t}})\nonumber\\
            & \leq \hat{t}\sum_{e\in E} C_e\hat{x}_e+c^{S}_e\hat{y}^S_e\leq \hat{t}L^*.\label{expcoub}
 \end{align}
 The item costs are such that $C_e, c^S_e\in [0,1]$, $e\in E$. Thus using~(\ref{expcoub})
 and applying Chernoff-Hoeffding bound given in
  \cite[Theorem~1 and inequality (1.13) for $D(\hat{t}L^*,1/2Kn^2)$]{R88}, we obtain
\begin{equation}
 \label{eub1}
 \mathrm{Pr}\left[\psi^S\right]\leq \mathrm{Pr}\left[ \mathrm{C}^S> 
  \hat{t}L^*+(\mathrm{e}-1)\sqrt{\hat{t}L^*\ln(2Kn^2)}\right]< \frac{1}{2Kn^2},
 \end{equation}
 which completes the proof.
 \end{proof}

We now analyze the feasibility of the obtained solution. In order to do this, it is convenient 
to see steps~\ref{k5} and~\ref{k6} of the algorithm in the following equivalent way. In a \emph{round} we flip an $\hat{x}$-coin for each $e\in E$ and add $e$ to $X$ when it comes up head; we then flip an $\hat{y}^S$-coin for each $e\in E$ and $S\in \mathcal{U}^D$ and add $e$ to $Y^S$ if it comes up head. Clearly, steps~\ref{k5} and~\ref{k6} can be seen as performing $\hat{t}$ such rounds independently. Let us fix scenario $S\in \mathcal{U}^D$.
Let $X_t$ and $Y^S_t$ be the sets of items selected in the first and second stage under $S$ (i.e. added to $X$ and $Y^S$), respectively, after $t$ rounds.  Define $E^S_t=E\setminus (X_t\cup Y^S_t)$ and $|E^S_t|=N^S_t$. Initially, $E^{S}_0=E$, $N^S_0=n$. 
Let $P^S_t$ denote the number of items remaining for selection out of the set $E^{S}_t$  under scenario~$S$ 
after the $t$th  round. Initially $P^S_0=p$.
We say that a round~$t$ is ``successful''  if
either $P^S_{t-1}<5$ (at most~4 items are to be selected) or
$P^S_{t}<0.88P^S_{t-1}$;
otherwise, it is ``failure''.
\begin{lem}
Fix scenario $S\in \mathcal{U}^D$. The conditional probability that round~$t$ is  
``successful'', given any set of items~$E^{S}_{t-1}$ 
and number~$P^S_{t-1}$,
is at least $1/2$.
\label{lalon}
\end{lem}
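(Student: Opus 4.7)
The plan is to reformulate ``successful round'' as a lower tail bound on a sum of independent Bernoullis and then conclude via Chebyshev's inequality. First, if $P^S_{t-1}<5$ the round is successful by definition and there is nothing to prove, so I assume $P^S_{t-1}\geq 5$ throughout. Let $Z$ denote the number of items in $E^S_{t-1}$ that are added to $X_t\cup Y^S_t$ in round~$t$. Since $P^S_t=\max\{0,\,P^S_{t-1}-Z\}$, the round is successful exactly when $Z>0.12\,P^S_{t-1}$. Because the coin flips attached to different items are mutually independent, $Z=\sum_{e\in E^S_{t-1}} Z_e$, where the $Z_e$ are independent Bernoulli indicators with $p_e:=\Pr[Z_e=1]=1-(1-\hat{x}_e)(1-\hat{y}^S_e)$.

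Next I lower-bound $\mu:=\mathbf{E}[Z\mid E^S_{t-1}]$. For any $a,b\geq 0$ with $a+b\leq 1$, AM--GM gives $ab\leq(a+b)^2/4\leq(a+b)/4$, so $1-(1-a)(1-b)=a+b-ab\geq(3/4)(a+b)$; combined with constraint~(\ref{cc}) this yields $p_e\geq(3/4)(\hat{x}_e+\hat{y}^S_e)$ for each $e\in E^S_{t-1}$. Constraint~(\ref{bc2}) gives $\sum_{e\in E}(\hat{x}_e+\hat{y}^S_e)=p$, and (\ref{cc}) together with the identity $|E\setminus E^S_{t-1}|=p-P^S_{t-1}$ (which holds whenever $P^S_{t-1}>0$) yields $\sum_{e\in E\setminus E^S_{t-1}}(\hat{x}_e+\hat{y}^S_e)\leq p-P^S_{t-1}$. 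Subtracting from $p$ gives $\sum_{e\in E^S_{t-1}}(\hat{x}_e+\hat{y}^S_e)\geq P^S_{t-1}$, and therefore $\mu\geq(3/4)P^S_{t-1}$.

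Finally I invoke Chebyshev. Since the $Z_e$ are independent Bernoullis, $\mathbf{Var}[Z\mid E^S_{t-1}]=\sum_e p_e(1-p_e)\leq\mu$. Because $\mu-0.12\,P^S_{t-1}>0$,
$$\Pr[Z\leq 0.12\,P^S_{t-1}\mid E^S_{t-1}]\leq \Pr[|Z-\mu|\geq \mu-0.12\,P^S_{t-1}]\leq \frac{\mu}{(\mu-0.12\,P^S_{t-1})^{2}}.$$
The right-hand side is decreasing in $\mu$ on $(0.12\,P^S_{t-1},\infty)$, so substituting the lower bound $\mu=(3/4)P^S_{t-1}$ upper-bounds it by $0.75/(0.63^{2}P^S_{t-1})<1.9/P^S_{t-1}\leq 1.9/5<1/2$, which is the desired conclusion. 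The main obstacle I anticipate is numerical: the constants $0.88$ and $5$ in the definition of ``successful'', together with the $3/4$ expectation bound and the naive variance estimate $\mathbf{Var}[Z]\leq\mathbf{E}[Z]$, are tuned so that Chebyshev just barely yields the $1/2$ threshold; any weaker expectation or variance control would force a sharper (e.g.\ Chernoff-type) concentration bound to close the gap.
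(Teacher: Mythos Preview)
Your argument is correct and follows the same high-level template as the paper (lower-bound the expected number of newly selected items, then apply a tail bound), but the two ingredients differ. The paper bounds the expectation by only $P^S_{t-1}/2$, via $(1-a)(1-b)\le 1-(a+b)/2$ for $a,b\in[0,1]$, and then needs a multiplicative Chernoff bound (with $\delta=\sqrt{4\ln 2/P^S_{t-1}}$) to conclude that $P^S_t\le P^S_{t-1}(1/2+\sqrt{\ln 2/P^S_{t-1}})<0.88\,P^S_{t-1}$ with probability at least $1/2$. You instead squeeze out the sharper bound $\mu\ge(3/4)P^S_{t-1}$ by additionally invoking constraint~(\ref{cc}) to get $\hat x_e+\hat y^S_e\le 1$, which makes $ab\le(a+b)^2/4\le(a+b)/4$ available; this extra slack lets you close the argument with the far more elementary Chebyshev inequality and the crude variance estimate $\mathrm{Var}[Z]\le\mu$. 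Your numerical check $0.75/(0.63^2\cdot 5)\approx 0.378<1/2$ is tight but valid, and your monotonicity observation that $\mu\mapsto\mu/(\mu-c)^2$ decreases on $(c,\infty)$ is what justifies plugging in the lower bound for $\mu$. The trade-off is clear: the paper's route works with a weaker expectation estimate but a stronger concentration tool, while yours is self-contained and avoids Chernoff entirely at the price of depending on the specific constants $0.88$ and $5$ being generous enough.
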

\begin{proof}
If  $P^S_{t-1}<5$, then 
we are done.  Assume that $P^S_{t-1}\geq 5$ and
consider the set of items~$E^{S}_{t-1}$, $|E^{S}_{t-1}|=N^{S}_{t-1}$ and
the number of items $P^S_{t-1}$, remaining for selection in round $t$ (i.e. after round $t-1$).
Let $\mathrm{I}_e$ be a random variable such  
that $\mathrm{I}_e=1$ if item~$e$ is  picked from $E_{t-1}^S$; 
and $\mathrm{I}_e=0$, otherwise. It is easily seen that
$\mathrm{Pr} [\mathrm{I}_e=1]=1-(1-\hat{x}_e) (1-\hat{y}^S_e)$.
The expected number of items selected out of~$E^{S}_{t-1}$ in round $t$
is
\begin{align*}
\mathbf{E}\left[\sum_{e\in E^{S}_{t-1}} \mathrm{I}_e\right]&=
\sum_{e\in E^{S}_{t-1}}\mathrm{Pr}\left[\mathrm{I}_e=1\right]=
N^{S}_{t-1}-\sum_{e\in E^{S}_{t-1}}(1-\hat{x}_e) (1-\hat{y}^S_e)\\
&\geq N^{S}_{t-1}-\sum_{e\in E^{S}_{t-1}}\left(1-\frac{\hat{x}_e+\hat{y}^S_e}{2} \right)^2\geq
N^{S}_{t-1}-\sum_{e\in E^{S}_{t-1}}\left(1-\frac{\hat{x}_e+\hat{y}^S_e}{2} \right)\geq
\frac{P^S_{t-1}}{2}.
\end{align*}
The first inequality follows from the fact that $ab\leq(a+b)/2$ for any $a,b\in [0,1]$ (indeed, $0\leq(a-b)^2=a^2-2ab+b^2\leq a-2ab+b$).
The last inequality follows from the fact that  the feasible  solution~$(\hat{\pmb{x}},\hat{\pmb{y}})$
satisfies  constraints~(\ref{bc2}).
Using  Chernoff bound (see, e.g.,\cite{MR95},
Theorem~4.2 and inequality (4.6) for $\delta=\sqrt{4\ln 2/P^S_{t-1}}$), we get
\[
\mathrm{Pr}\left[\sum_{e\in E^{S}_{t-1}} \mathrm{I}_e < \frac{P^S_{t-1}}{2}-\sqrt{P^S_{t-1}\ln 2}\right]\leq \frac{1}{2}.
\]
Thus, with  probability at 
least $1/2$, the number of selected items in round $t$  is at least
$P^S_{t-1}/2-\sqrt{P^S_{t-1}\ln 2}$.
 Hence, with probability at least $1/2$ it holds
\[
P^S_t\leq P^S_{t-1}-P^S_{t-1}/2+\sqrt{P^S_{t-1}\ln 2}=P^S_{t-1}(1/2+\sqrt{\ln 2/P^S_{t-1}}).
\]
Consequently, when $P^S_{t-1}\geq 5$ we get
$$P^S_t<0.88 P^S_{t-1}$$
with probability at least $1/2$.
\end{proof}
 \begin{lem}
Fix scenario $S\in \mathcal{U}^D$.
 The probability of the event that $P^S_{\hat{t}}\geq 5$  is at most $1/(2Kn^2)$, where
 $\hat{t}=\lceil 32\ln n +8\ln (2K) \rceil$.
  \label{lbou2}
 \end{lem}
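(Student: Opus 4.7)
I would prove the lemma by coupling the event $\{P^S_{\hat{t}} \geq 5\}$ with a deficit in the number of ``successful'' rounds, and then applying a Chernoff/Hoeffding tail bound. The starting observation is that $P^S_0, P^S_1, \dots, P^S_{\hat{t}}$ is non-increasing (items are only added, never removed, from $X_t \cup Y^S_t$). Hence, on the event $\{P^S_{\hat{t}} \geq 5\}$, we have $P^S_{t-1} \geq 5$ for every $t \in \{1, \dots, \hat{t}\}$, so each successful round in this event must automatically fall into the second case of the definition, i.e.\ satisfy $P^S_t < 0.88\, P^S_{t-1}$.

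Let $T$ denote the number of successful rounds among the $\hat{t}$ rounds performed. Iterating the shrinkage bound above, the occurrence of $T$ successful rounds forces $P^S_{\hat{t}} \leq p \cdot 0.88^T \leq n \cdot 0.88^T$. Therefore, on the event $\{P^S_{\hat{t}} \geq 5\}$ we obtain $T \leq \ln(n/5)/\ln(1/0.88)$; since a direct numerical check yields $\ln(1/0.88) > 1/8$, this simplifies to $T \leq 8 \ln n$. Thus it suffices to show that $\mathrm{Pr}[T \leq 8\ln n] \leq 1/(2Kn^2)$.

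By Lemma~\ref{lalon}, the conditional probability that any given round is successful, given the entire prior history, is at least $1/2$. A standard coupling with independent $\mathrm{Bern}(1/2)$ random variables shows that $T$ stochastically dominates a $\mathrm{Bin}(\hat{t}, 1/2)$ random variable, so it is enough to upper-bound $\mathrm{Pr}[\mathrm{Bin}(\hat{t},1/2) \leq 8\ln n]$. With $m = \hat{t} \geq 32\ln n + 8\ln(2K)$, the mean is $m/2$ and the deviation is $m/2 - 8\ln n \geq 8\ln n + 4\ln(2K)$, so Hoeffding's inequality yields
\[
\mathrm{Pr}[\mathrm{Bin}(m,1/2) \leq 8\ln n] \;\leq\; \exp\!\left(-\frac{2(8\ln n + 4\ln(2K))^2}{32\ln n + 8\ln(2K)}\right).
\]
A short simplification, using $(8\ln n + 4\ln(2K))^2 = 16(2\ln n + \ln(2K))^2$ and $32\ln n + 8\ln(2K) = 8(4\ln n + \ln(2K))$, shows that this exponent is at least $2\ln n + \ln(2K) = \ln(2Kn^2)$, which gives the claimed bound. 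The main technical obstacle is simply the bookkeeping of constants: one must check that the specific choice of $\hat{t}$ in Algorithm~\ref{alg1} is large enough to make the Hoeffding exponent clear the target $\ln(2Kn^2)$, given the shrinkage factor $0.88$; everything else --- monotonicity of $P^S_t$, the coupling to Binomial, and the tail inequality --- is routine.
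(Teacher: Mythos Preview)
Your proposal is correct and follows essentially the same approach as the paper: reduce the event $\{P^S_{\hat t}\ge 5\}$ to a deficit of successful rounds, couple the success indicators with $\mathrm{Bin}(\hat t,1/2)$ via Lemma~\ref{lalon}, and apply a tail bound. The only cosmetic differences are that the paper uses $8\ln p$ (then invokes $p<n$) rather than $8\ln n$ directly, and applies the multiplicative Chernoff form $e^{-\mu\delta^2/2}$ instead of Hoeffding's $e^{-2t^2/m}$; your Hoeffding exponent is in fact twice as large, so the arithmetic clears the target $\ln(2Kn^2)$ with room to spare.
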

\begin{proof}
Let $\xi^S$ be the event that  $ P^S_{\hat{t}}\geq 5$, i.e. that the number of items remaining for selection after $\hat{t}=\lceil 32\ln n +8\ln 2K \rceil$ rounds is at least~5. 
We now estimate the number $\ell$ of successful rounds which are enough to achieve 
$P^S_{\ell}<5$.  It is easy to see that $\ell$  satisfies $(0.88)^{\ell} p \leq 4$.
 In particular, this inequality holds when $\ell\geq 8\ln p$.
Let~$\mathrm{Z}$ be a random variable denoting the number
of  ``successful''
rounds among $\hat{t}$~performed rounds.
We estimate $\mathrm{Pr}[\mathrm{Z}<8\ln p]$ from above by 
$\mathrm{Pr}[\mathrm{B}(\hat{t},1/2)<8\ln p]$, where
$\mathrm{B}(\hat{t},1/2)$ is a binomial random variable.
This 
can be done, since we have 
a lower bound on  success of given any history.
Applying  Chernoff bound 
(see, e.g.,
\cite[Theorem~4.2 and inequality (4.6) for $\delta=\ln (2Kn^2)/\ln (2Kn^4)$]{MR95} and
$\hat{t}=\lceil 32\ln n +8\ln (2K) \rceil$)
 and the fact that $p<n$,
 we obtain the following upper bound:
\begin{align}
\mathrm{Pr}[\xi^S]\leq \mathrm{Pr}[\mathrm{Z}<8\ln p]&\leq 
\mathrm{Pr}[\mathrm{B}(\hat{t},1/2)<8\ln p]
\leq \mathrm{e}^{-(8\ln n+4\ln (2K))^2/2(16\ln n+4\ln(2K))}\nonumber\\
&<\mathrm{e}^{-(8\ln n+4\ln (2K))/4}= \frac{1}{2Kn^2},\label{eub2}
\end{align}
and the lemma follows.
\end{proof}
  Lemma~\ref{lbou1} and Lemma~\ref{lbou2} (see also inequalities~(\ref{eub1}) and~(\ref{eub2})) 
  and the union bound imply that $\Pr[\xi^{S_1}\cup\dots\cup\xi^{S_K}\cup\psi^{S_1}\dots\cup\psi^{S_K}]< 1/n^2$. 
  Therefore after $\hat{t}=\lceil 32\ln n +8\ln (2K) \rceil$ rounds the cost of solution  $X \cup Y^S$
  found is $O((\log K+ \log n)L^*)$ and 
  the number of items remaining for selection is at most~$4$
  for every $S\in \mathcal{U}^D$
  with probability at least $1-\frac{1}{n^2}$.
The addition of~4 items to $X$ in step~\ref{k7} can  increase the cost of the computed solution by 
at most $4\cdot L^*$. As all the costs are nonnegative, repairing  $X \cup Y^S$, $S\in \mathcal{U}^D$,
   to obtain a feasible solution cannot increase the cost of the computed solution.
   Since $L^*\leq opt_1$, we
   get the following result:
\begin{thm}
There is a polynomial time randomized algorithm for
 \textsc{Two-Stage Selection}
that returns an $O(\log K + \log n)$-approximate solution with high probability.
\end{thm}
It is worth pointing out
that if $K=poly(n)$, then our randomized algorithm gives the best approximation ratio up to a constant factor
(see Theorem~\ref{thmhard2}).

\section{Conclusions and open problems}

In this paper  we have discussed two robust versions of the \textsc{Selection} problem, which have a two-stage nature. In the first problem, a partial solution is formed in the first stage and completed optimally when a true state of the world reveals. In the second problem a complete solution must be formed in the first stage, but it can be modified to some extent after a true state of the world becomes known. Such two-stage problems often appear in practical applications. In this paper we have presented some positive and negative complexity results for two types of uncertainty representations, namely the discrete and interval ones. In particular, we have shown that both problems are polynomially solvable
  for the interval uncertainty representation. We believe that a similar method might be applied to other combinatorial optimization problems, in particular for those possessing a matroidal structure. When the number of scenarios is a part of input, then  the recoverable model is not at all approximable and the two-stage model 
  has an approximability lower bound of~$\Omega(\log n)$. 
   We have shown that the latter one admits a randomized $O(\log n+ \log K)$-approximation algorithm.

There are still some open questions concerning the considered problems. A deterministic $\log n$-approximation algorithm for the two-stage problem may exist and it can be a subject of further research. When $K$ is constant, then both robust problems are only proven to be weakly NP-hard. So, they might be solved in pseudopolynomial time and even admit an FPTAS. The interval uncertainty representation can be generalized by adopting the scenario set proposed in~\cite{BS03}. The complexity of both robust problems under this scenario set is open.

\subsubsection*{Acknowledgements}
This work was  supported by
 the National Center for Science (Narodowe Centrum Nauki), grant  2013/09/B/ST6/01525.


\begin{thebibliography}{10}

\bibitem{AALL15}
P.~Adasme, R.~Andrade, M.~Letournel, and A.~Lisser.
\newblock Stochastic maximum weight forest problem.
\newblock {\em Networks}, 65:289--305, 2015.

\bibitem{AMO93}
R.~K. Ahuja, T.~L. Magnanti, and J.~B. Orlin.
\newblock {\em Network Flows: theory, algorithms, and applications}.
\newblock Prentice Hall, Englewood Cliffs, New Jersey, 1993.

\bibitem{A01}
I.~Averbakh.
\newblock On the complexity of a class of combinatorial optimization problems
  with uncertainty.
\newblock {\em Mathematical Programming}, 90:263--272, 2001.

\bibitem{BBI14}
F.~Baumann, C.~Buchheim, and A.~Ilyina.
\newblock A {L}agrangean decomposition approach for robust combinatorial
  optimization.
\newblock Technical report, Optimization Online, 2014.

\bibitem{BS03}
D.~Bertsimas and M.~Sim.
\newblock Robust discrete optimization and network flows.
\newblock {\em Mathematical Programming}, 98:49--71, 2003.

\bibitem{B11}
C.~B{\"u}sing.
\newblock {\em Recoverable robustness in combinatorial optimization}.
\newblock PhD thesis, Technical University of Berlin, Berlin, 2011.

\bibitem{B12}
C.~B{\"u}sing.
\newblock Recoverable robust shortest path problems.
\newblock {\em Networks}, 59:181--189, 2012.

\bibitem{BKK11}
C.~B{\"u}sing, A.~M. C.~A. Koster, and M.~Kutschka.
\newblock Recoverable robust knapsacks: the discrete scenario case.
\newblock {\em Optimization Letters}, 5:379--392, 2011.

\bibitem{CG15b}
A.~Chassein and M.~Goerigk.
\newblock On the recoverable robust traveling salesman problem.
\newblock {\em Optimization Letters}, 10:1479--1492, 2016.

\bibitem{C04}
E.~Conde.
\newblock An improved algorithm for selecting $p$ items with uncertain returns
  according to the minmax regret criterion.
\newblock {\em Mathematical Programming}, 100:345--353, 2004.

\bibitem{CO90}
T.~Cormen, C.~Leiserson, and R.~Rivest.
\newblock {\em Introduction to {A}lgorithms}.
\newblock {MIT} Press, 1990.

\bibitem{DGRS05}
K.~Dhamdhere, V.~Goya, R.~Ravi, and M.~Singh.
\newblock {H}ow to {P}ay, {C}ome {W}hat {M}ay: {A}pproximation {A}lgorithms for
  {D}emand-{R}obust {C}overing {P}roblems.
\newblock In {\em Annual {IEEE} Symposium on Foundations of Computer Science
  {(FOCS)}}, pages 367--378, 2005.

\bibitem{D13}
B.~Doerr.
\newblock Improved approximation algorithms for the {M}in-{M}ax selecting
  {I}tems problem.
\newblock {\em Information Processing Letters}, 113:747--749, 2013.

\bibitem{EGMS10}
B.~Escoffier, L.~Gourves, J.~Monnot, and O.~Spanjaard.
\newblock Two-stage stochastic matching and spanning tree problems: Polynomial
  instances and approximation.
\newblock {\em European Journal of Operational Research}, 205:19--30, 2010.

\bibitem{GJ79}
M.~R. Garey and D.~S. Johnson.
\newblock {\em Computers and {I}ntractability. A {G}uide to the {T}heory of
  {NP}-{C}ompleteness}.
\newblock W. H. Freeman and Company, 1979.

\bibitem{GH62}
A.~Ghouila-Houri.
\newblock Caract{\'e}risation des matrices totalement unimodulaires.
\newblock {\em C. R. Acad. Sci. Paris}, 254:1192--1194, 1962.

\bibitem{HKZ16a}
M.~Hradovich, A.~Kasperski, and P.~Zieli{\'n}ski.
\newblock Recoverable robust spanning tree problem under interval uncertainty
  representations.
\newblock {\em Journal of Combinatorial Optimization}, 34: 554-573, 2017.

\bibitem{IK88}
T.~Ibaraki and N.~Katoh.
\newblock {\em Resource Allocation Problems}.
\newblock The MIT Press, 1988.

\bibitem{KKZ13}
A.~Kasperski, A.~Kurpisz, and P.~Zieli{\'n}ski.
\newblock Approximating the min-max (regret) selecting items problem.
\newblock {\em Information Processing Letters}, 113:23--29, 2013.

\bibitem{KZ09a}
A.~Kasperski and P.~Zieli{\'n}ski.
\newblock A randomized algorithm for the min-max selecting items problem with
  uncertain weights.
\newblock {\em Annals of Operations Research}, 172:221--230, 2009.

\bibitem{KZ11}
A.~Kasperski and P.~Zieli{\'n}ski.
\newblock On the approximability of robust spanning problems.
\newblock {\em Theoretical Computer Science}, 412:365--374, 2011.

\bibitem{KMU08}
I.~Katriel, C.~Kenyon-Mathieu, and E.~Upfal.
\newblock Commitment under uncertainty: two-stage matching problems.
\newblock {\em Theoretical Computer Science}, 408:213--223, 2008.

\bibitem{KO14}
S.~Kosuch.
\newblock Approximability of the two-stage stochastic knapsack problem with
  discretely distributed weights.
\newblock {\em Discrete Applied Mathematics}, 165:192--204, 2014.

\bibitem{KL11}
S.~Kosuch and A.~Lisser.
\newblock On two-stage stochastic knapsack problems.
\newblock {\em Discrete Applied Mathematics}, 159:1827--1841, 2011.

\bibitem{KY97}
P.~Kouvelis and G.~Yu.
\newblock {\em Robust Discrete Optimization and its Applications}.
\newblock Kluwer Academic Publishers, 1997.

\bibitem{LLMS09}
C.~Liebchen, M.~E. L{\"{u}}bbecke, R.~H. M{\"{o}}hring, and S.~Stiller.
\newblock The concept of recoverable robustness, linear programming recovery,
  and railway applications.
\newblock In {\em Robust and {O}nline {L}arge-{S}cale {O}ptimization}, volume
  5868 of {\em Lecture Notes in Computer Science}, pages 1--27.
  Springer-Verlag, 2009.

\bibitem{MR95}
R.~Motwani and P.~Raghavan.
\newblock {\em Randomized {A}lgorithms}.
\newblock Cambridge University Press, 1995.

\bibitem{NO13}
E.~Nasrabadi and J.~B. Orlin.
\newblock Robust optimization with incremental recourse.
\newblock {\em CoRR}, abs/1312.4075, 2013.

\bibitem{PS98}
C.~H. Papadimitriou and K.~Steiglitz.
\newblock {\em Combinatorial optimization: algorithms and complexity}.
\newblock Dover Publications Inc., 1998.

\bibitem{R88}
P.~Raghavan.
\newblock {P}robabilistic {C}onstruction of {D}eterministic {A}lgorithms:
  {A}pproximating {P}acking {I}nteger {P}rograms.
\newblock {\em Journal of Computer and System Sciences}, 37:130--143, 1988.

\bibitem{RS97}
R.~Raz and S.~Safra.
\newblock A {S}ub-{C}onstant {E}rror-{P}robability {L}ow-{D}egree {T}est, and a
  {S}ub-{C}onstant {E}rror-{P}robability {PCP} {C}haracterization of {NP}.
\newblock In {\em Proceedings of the Twenty-Ninth Annual {ACM} Symposium on the
  Theory of Computing}, pages 475--484, 1997.

\end{thebibliography}

\end{document}